\documentclass{article}

\usepackage[final,nonatbib]{my_sty}

\usepackage[utf8]{inputenc} %
\usepackage[T1]{fontenc}    %
\usepackage{hyperref}       %
\usepackage{url}            %
\usepackage{booktabs}       %
\usepackage{amsfonts}       %
\usepackage{nicefrac}       %
\usepackage{microtype}      %

\usepackage{amsmath,bm,amssymb}
\usepackage{algorithm}
\usepackage[noend]{algpseudocode}
\usepackage{amsthm}
\newtheorem{theorem}{Theorem}
\usepackage{graphicx}
\usepackage{tikz}

\usepackage{array}
\usepackage[normalem]{ulem}

\pdfstringdefDisableCommands{%
    \renewcommand*{\bm}[1]{#1}%
}

\newcommand{\rev}[1]{{#1}}

\title{
Fast Online Deconvolution of Calcium Imaging Data
}

\author{
Johannes Friedrich$^{1,2}$, Pengcheng Zhou$^{1,3}$, Liam Paninski$^{1,4}$ \\
$^1$Grossman Center for the Statistics of Mind and Department of Statistics,\\
Columbia University,
New York, NY\\
$^2$Janelia Research Campus,
Ashburn, VA \\
$^3$Center for the Neural Basis of Cognition and Machine Learning Department,\\ Carnegie Mellon University, Pittsburgh, PA\\
$^4$Kavli Institute for Brain Science, and NeuroTechnology Center\\ 
Columbia University, New York, NY, USA\\
\texttt{j.friedrich@columbia.edu, pengchez@andrew.cmu.edu, liam@stat.columbia.edu} \\
}

\begin{document}

\maketitle

\begin{abstract}
Fluorescent calcium indicators are a popular means for observing the spiking activity of large neuronal populations, but extracting the activity of each neuron from raw fluorescence calcium imaging data is a nontrivial problem.
We present a fast online active set method to solve this sparse non-negative deconvolution problem. 
Importantly, the algorithm progresses through each time series sequentially from beginning to end, thus enabling real-time online estimation of neural activity during the imaging session.
Our algorithm is a generalization of the pool adjacent violators algorithm (PAVA) for isotonic regression and inherits its linear-time computational complexity. 
We gain remarkable increases in processing speed: more than one order of magnitude compared to currently employed state of the art convex solvers relying on interior point methods.
\rev{Unlike these approaches,} our method can exploit warm starts; therefore optimizing model hyperparameters only requires a handful of passes through the data.
A minor modification can further improve the quality of activity inference by imposing a constraint on the minimum spike size.
The algorithm enables real-time simultaneous deconvolution of $O(10^5)$ traces of whole-brain larval zebrafish imaging data on a laptop. 
\end{abstract}

\section{Introduction}

Calcium imaging has become one of the most widely used techniques for recording activity from neural populations in vivo~\cite{grienberger2012}. The basic principle of calcium imaging is that neural action potentials (or spikes), the point process signal of interest, each induce an optically measurable transient response in calcium dynamics. The nontrivial problem of extracting the activity of each neuron from a raw fluorescence trace has been addressed with several different approaches, including template matching~\cite{grewe2010} and linear deconvolution~\cite{yaksi2006,holekamp2008}, which are outperformed by sparse non-negative deconvolution~\cite{vogelstein2010}. The latter can be interpreted as the maximum a posteriori (MAP) estimate under a simple generative model (linear convolution plus noise; Fig~\ref{fig:model}), whereas fully Bayesian methods~\cite{vogelstein2009,pnevmatikakis2013,deneux2016} can provide some further improvements, but are more computationally expensive.  Supervised methods trained on simultaneously-recorded electrophysiological and imaging data~\cite{sasaki2008,theis2016} have also recently achieved state of the art results, but are more black-box in nature; Bayesian methods based on a well-defined generative model are somewhat easier to generalize to more complex multi-neuronal or multi-trial settings~\cite{mishchencko2011,picardo2016,pnevmatikakis2016}.

The methods above are typically applied to imaging data offline, after the experiment is complete; however, there is a need for accurate and fast real-time processing to enable closed-loop experiments, a powerful strategy for causal investigation of neural circuitry~\cite{grosenick2015}. In particular, observing and feeding back the effects of circuit interventions on physiologically relevant timescales will be valuable for directly testing whether inferred models of dynamics, connectivity, and causation are accurate in vivo, and recent experimental advances~\cite{rickgauer2014,packer2015} are now enabling work in this direction. 
Brain-computer interfaces (BCIs) also rely on real-time estimates of neural activity. Whereas most BCI systems rely on electrical recordings,  BCIs have been driven by optical signals too~\cite{clancy2014}, providing new insight into how neurons change their activity during learning on a finer spatial scale than possible with intracortical electrodes.  Finally, adaptive experimental design approaches~\cite{lewi2009,park2012,shababo2013} also rely on online estimates of neural activity.

Even in cases where we do not require the strict timing/latency constraints of real-time processing, we still need methods that scale to large data sets as for example in whole-brain imaging of larval zebrafish~\cite{ahrens2013,vladimirov2014}.
A further demand for scalability stems from the fact that the deconvolution problem is solved in the inner loop of constrained non-negative matrix factorization (CNMF)~\cite{pnevmatikakis2016}, the current state of the art for simultaneous denoising, deconvolution, and demixing of spatiotemporal calcium imaging data.

In this paper we address the pressing need for scalable online spike inference methods. Building on previous work, we frame this estimation problem as a sparse non-negative deconvolution. Current algorithms employ interior point methods to solve the ensuing optimization problem and are fast enough to process hundreds of neurons in about the same time as the recording~\cite{vogelstein2010}, but can not handle larger data sets such as whole-brain zebrafish imaging in real time. Furthermore, these interior point methods scale linearly in the length of the recording, but they cannot be warm started~\cite{potra2000}, i.e., initialized with the solution from a previous iteration to gain speed-ups, and do not run online. 

Here we note a close connection between the MAP problem and isotonic regression, which fits data by a monotone piecewise constant function. 
A classic algorithm for isotonic regression is the pool adjacent violators algorithm (PAVA)~\cite{ayer1955,barlow1972}, which can be understood as an online active-set optimization method. 
We generalized PAVA to derive an Online Active Set method to Infer Spikes (OASIS); this new approach to solve the MAP problem yields speed-ups in processing time by at least one order of magnitude compared to interior point methods on both simulated and real data.
Further, OASIS can be warm-started, which is useful in the inner loop of CNMF, and also when adjusting model hyperparameters, as we show below.
Importantly, OASIS is not only much faster, but operates in an online fashion, progressing through the fluorescence time series sequentially from beginning to end. The advances in speed paired with the inherently online fashion of the algorithm enable true real-time online spike inference during the imaging session \rev{(once the spatial shapes of neurons in the field of view have been identified)}, with the potential to significantly impact experimental paradigms.

\section{Methods}

\rev{
This section is organized as follows.
The first subsection introduces the autoregressive (AR($p$)) model for calcium dynamics. 

In the second subsection we derive an Online Active Set method to Infer Spikes (OASIS) for an AR(1) model. The algorithm is inspired by the pool adjacent violators algorithm (PAVA, Alg~\ref{alg:PAVA}), which we review first and then generalize to obtain OASIS (Alg~\ref{alg:AR1}). This algorithm requires some hyperparameter values; the optimization of these hyperparameters is described next, along with several computational tricks for speeding up the hyperparameter estimation. 
We finally discuss thresholding approaches for reducing the number of small values returned by the original $\ell_1$-penalized approach. The resulting problem is non-convex, and so we lose guarantees on finding global optima, but we can easily adapt OASIS to quickly find good solutions.

In the third subsection we generalize to AR($p$) models of the calcium dynamics and describe a dual active set algorithm that is analogous to the one presented for the AR(1) case (Alg~\ref{alg:AR1}). However, this algorithm is greedy if $p>1$ and yields only a good approximate solution. We can refine this solution and obtain the exact result by warm-starting an alternative primal active set method we call ONNLS (Alg~\ref{alg:NNLS}).  Finally, Alg~\ref{alg:full} summarizes all of these steps.
}

\subsection{Model for calcium dynamics}

We assume we observe the fluorescence signal for $T$ timesteps, and denote by $s_t$ the number of spikes that the
neuron fired at the $t$-th timestep, $t = 1, ..., T$, cf.\ Fig~\ref{fig:model}. Following~\cite{vogelstein2010,pnevmatikakis2016},
we approximate the calcium concentration dynamics $\bm c$ using a stable
autoregressive process of order $p$ (AR($p$)) where $p$ is a small positive integer, usually $p=1$ or $2$,
\begin{equation}
c_t= \sum_{i=1}^p \gamma_i c_{t-i}+s_t. \label{eq:dynamics}
\end{equation}
The observed fluorescence $\bm y\in\mathbb{R}^T$ is related to the calcium concentration as~\cite{vogelstein2010,vogelstein2009,pnevmatikakis2013}:
\begin{equation}
y_t = a\, c_t+b+\epsilon_t,\quad \epsilon_t \sim \mathcal{N}(0,\sigma^2) \label{eq:observation}
\end{equation}
where $a$ is a non-negative scalar, $b$ is a scalar offset parameter, and the noise is assumed to be i.i.d.\ zero mean Gaussian with variance $\sigma^2$. For the remainder we assume units such that $a=1$ without loss of generality.  We begin by assuming $b=0$ for simplicity, but we will relax this assumption later.  \rev{(We also assume throughout that all parameters in sight are fixed; in case of e.g.\ drifting baselines $b$ we could generalize the algorithms discussed here to operate over shorter temporal windows, but we do not pursue this here.)} The parameters $\gamma_i$ and $\sigma$ can be estimated from the autocovariance function and the power spectral density (PSD) of $\bm y$ respectively~\cite{pnevmatikakis2016}. 
The autocovariance approach assumes that the spiking signal $\bm s$ comes from a homogeneous Poisson process and in practice often gives a crude estimate of $\gamma_i$. We will improve on this below (Fig~\ref{fig:opt}) by fitting the AR coefficients directly, which leads to better estimates, particularly when the spikes have some significant autocorrelation.

\begin{figure}
\includegraphics[width=\textwidth]{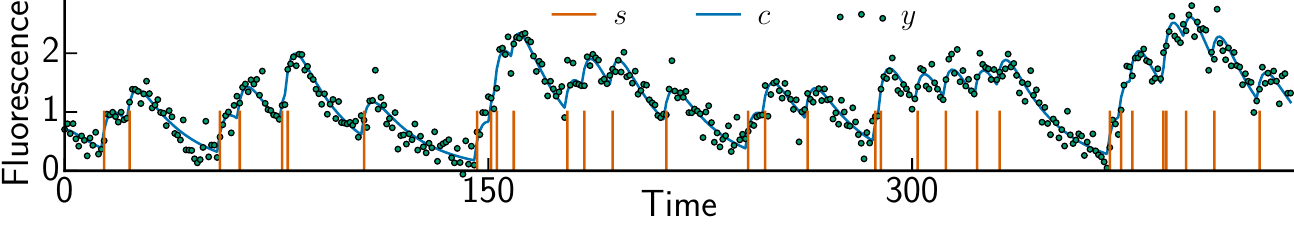}
\caption{{\bf Generative autoregessive model for calcium dynamics.} Spike train $\bm s$ gets filtered to produce calcium trace $\bm c$; here we used $p=2$ as order of the AR process. Added noise yields the observed fluorescence $\bm y$.}
\label{fig:model}
\end{figure}

The goal of calcium deconvolution is to extract an estimate $\hat{\bm s}$ of the neural activity $\bm s$ from the vector of observations $\bm y$.
As discussed in~\cite{vogelstein2010,pnevmatikakis2016}, this leads to the following non-negative LASSO problem for estimating the calcium concentration:
\begin{equation}
\underset{\hat{\bm c}\rev{,\hat{\bm s}}}{\textrm{minimize}}\quad  \tfrac{1}{2}\|\hat{\bm c}-\bm y\|^2 + \lambda \|\hat{\bm s}\|_1\quad
\textrm{subject to\quad}  \hat{\bm s}=G\hat{\bm c} \ge 0 \label{eq:problem}
\end{equation}
where the $\ell_1$ penalty on $\hat{\bm s}$ enforces sparsity of the neural activity and the lower triangular matrix $G$ is defined as:
\begin{equation}
G = {\scriptsize
\rev{
\begin{pmatrix}
  1 & 0 & \hdots & \hdots & \hdots & \hdots & 0  \\
  -\gamma_1 & 1 & \ddots & \ddots & \ddots & \ddots & 0  \\
  \vdots &  \ddots &  \ddots & \ddots& \ddots & \ddots & \vdots \\
  -\gamma_p &  \hdots &  -\gamma_1& 1 & 0 & \hdots & 0\\
  0 & -\gamma_p &  \hdots &  -\gamma_1& 1 & \ddots & 0\\ 
  \vdots &  \ddots &  \ddots & \ddots& \ddots & \ddots & \vdots \\
  0 & \hdots & 0 & -\gamma_p & \hdots & -\gamma_1 & 1 
\end{pmatrix}}.
}
\end{equation}
The \rev{deconvolution} matrix $G$ is banded with bandwidth $p$ for an AR($p$) process. 
\rev{Equivalently, $\bm s = \bm c\ast\bm g$ with $g$ a finite impulse response filter of order $p$ ($p+1$ filter taps) and $\ast$ denoting convolution. To produce calcium trace $\bm c$,
spike train $\bm s$ is filtered with the inverse filter of $\bm g$, an infinite impulse response $\bm h$, $\bm c = \bm s\ast\bm h$.}
(Although our main focus is on the autoregressive model, we will discuss more general convolutional observation models below as well, \rev{and touch on nonlinear effects such as saturation in the appendix}.)
Following the approach in~\cite{vogelstein2010}, note that the spike signal $\hat{\bm s}$ is relaxed from non-negative integers to arbitrary non-negative values.

\subsection{Derivation of the active set algorithm}

The optimization problem (\ref{eq:problem}) could be solved using generic convex program solvers. Here we derive the much faster Online Active Set method to Infer Spikes (OASIS). The algorithm is inspired by the pool adjacent violators algorithm (PAVA)~\cite{ayer1955,barlow1972}, which we review first for readers not familiar with this classic algorithm before generalizing it to the non-negative LASSO problem.

\subsubsection{Pool Adjacent Violators Algorithm (PAVA)}

\begin{algorithm}[b]\small
\caption{Pool Adjacent Violators Algorithm (PAVA) for isotonic regression}\label{alg:PAVA}
\begin{algorithmic}[1]
\Require \rev{data $y_t\in\bm y$ at time of reading}
\State \rev{initialize set of pools $\mathcal{P} \leftarrow \{\}$, data index $t\leftarrow 0$, pool index $i\leftarrow 0$}
\For{\rev{$y$ in $\bm y$}} \Comment{\rev{read next data point $y$}}
	\State \rev{$t\leftarrow t+1$}
	\State \rev{$\mathcal{P}\leftarrow \mathcal{P}\cup \{(y, 1, t)\}$ 
    		\Comment{add pool $(v_{i+1}, w_{i+1}, t_{i+1})$}}
     \While{$i>0$ and $v_{i+1} < v_{i}$} \Comment{merge pools if necessary}
     	\State $ \mathcal{P}_{i} \leftarrow \left(\frac{w_{i}v_{i}+w_{i+1} v_{i+1}}{w_{i}+w_{i+1}}, w_{i}+w_{i+1}, t_i\right) $
	\State remove $\mathcal{P}_{i+1}$ 
        \State $i \leftarrow i-1$
    \EndWhile 
    \State $i \leftarrow i+1$
\EndFor
\For{$(v,w,t)$ in $\mathcal{P}$} \Comment{construct solution for all $t$}
   \For{$\tau = 0, ..., w-1$}
	$x_{t+\tau} \leftarrow v$
    \EndFor
\EndFor
\State \textbf{return} $\bm x$
 \end{algorithmic} 
\end{algorithm}

The pool adjacent violators algorithm (Alg~\ref{alg:PAVA}) is a classic exact algorithm for isotonic regression, which fits data by a non-decreasing piecewise constant function. This algorithm is due to~\cite{ayer1955} and was independently discovered by other authors~\cite{vaneeden1958,miles1959} as reviewed in~\cite{barlow1972,mair2009}. It can be considered as a dual active set method~\cite{best1990}. Formally, the (convex) problem is to 
\begin{equation}
\underset{\bm x}{\textrm{minimize}}\quad \|\bm x-\bm y\|^2 \quad
\textrm{subject to\quad}   x_1\le ...\le x_T. \label{eq:isotonic}
\end{equation}

We first present the algorithm in a way that conveys its core ideas \rev{(see Alg~\ref{alg:AVA} in the Appendix)}, then improve the algorithm's efficiency by introducing ``pools'' of variables (adjacent $x_t$ values) which are updated simultaneously. We introduce temporary values $\bm x'$ and initialize them to the unconstrained least squares solution, $\bm x'=\bm y$. Initially all constraints are in the ``passive set'' and possible violations are fixed by subsequently adding the respective constraints to the ``active set''. Starting at $t=2$ the algorithm moves to the right until a violation of the constraint $x_{\tau}' \ge x_{\tau-1}'$ at some time $\tau$ is encountered.
Now the monotonicity constraint is added to the active set and enforced by setting $x_{\tau}' = x_{\tau-1}'$.  
(Supposing the opposite, i.e.\ $x_{\tau}' > x_{\tau-1}'$, we could move $x_{\tau}'$ and $x_{\tau-1}'$ by some small $\epsilon$ to decrease the objective without violating the constraints, yielding a proof by contradiction that the monotonicity constraint should be made ``active'' here - i.e., the constraint holds with strict equality.)
We update the values $x_{\tau}' = x_{\tau-1}'$ at the two time steps to the best possible fit with constraints. Minimizing their contribution to the residual $(y_{\tau-1}-x_{\tau-1}')^2 +(y_{\tau}-x_{\tau-1}')^2$ by setting the derivative with respect to $x_{\tau-1}'$ to zero,  $y_{\tau-1}-x_{\tau-1}'+y_{\tau}-x_{\tau-1}'=0$, amounts to replacing the values with their average, $x_{\tau-1}' = x_{\tau}' = \frac{y_{\tau-1}+y_{\tau}}{2}$. 
However, this updated value can violate the constraint $x_{\tau-1}'\ge x_{\tau-2}'$ and we need to add this constraint to the active set and update $x_{\tau-2}'$ as well, $x_{\tau-2}' = x_{\tau-1}' = x_{\tau}' = \frac{y_{\tau-2}+y_{\tau-1}+y_{\tau}}{3}$, etc.
In this manner the algorithm continues to back-average to the left as needed until we have backtracked to time $t'$ where the constraint $x_{t'}'\ge x_{t'-1}'$ is already valid. 
Solving
\begin{equation}
\underset{x_{t'}'}{\rm minimize}\quad \sum_{t=t'}^{\tau} (x_{t'}' - y_{t})^2
\end{equation}
by setting the derivative to zero yields an update that corresponds to averaging
\begin{equation}
x_{t'}' = x_{t'+1}' = ... = x_{\tau}' = \frac{\sum_{t=t'}^{\tau}  y_t}{\tau-t'+1}. \label{eq:pava}
\end{equation}
The optimal solution that  satisfies all constraints up to time $\tau$ has been found and the search advances to the right again until detection of the next violation, backtracks again, etc. This process continues until the last value $x_T$ is reached and having found the optimal solution we return $\bm x=\bm x'$. 

In a worst case situation a constraint violation is encountered at every step of the forward sweep through the series. Updating all $t$ values up to time $t$ yields overall $\sum_{t=2}^T t=\frac{T(T+1)}{2}-1$ updates and an $O(T^2)$ algorithm.
However, note that when a violation is encountered the updated time points all share the same value (the average of the data at these time points, Eq~\ref{eq:pava}) and it suffices to track this value just once for all these updated time points~\cite{grotzinger1984}. The constraints $x_t'\ge x_{t-1}'$ between the updated time points hold with equality $x_t'= x_{t-1}'$, and are part of the active set. 
In order to obtain a more efficient algorithm, cf.\ Algorithm \ref{alg:PAVA} and \nameref{Video S1}, we introduce ``pools'' or groups of the form $(v_i, w_i, t_i)$ with value $v_i$, weight $w_i$ and event time $t_i$ where $i$ indices the groups. \rev{Initially the ordered set of pools is empty. During the forward sweep through the data the next data point $y_t$ is initialized as its own pool $(y_t,1,t)$ and appended to the set of pools. Adjacent pools that violate the constraint $v_{i+1}\ge v_i$ are combined to a new pool $(\frac{w_i v_i+w_{i+1}v_{i+1}}{w_i+w_{i+1}}, w_i+w_{i+1}, t_i)$. Whenever pools $i$ and $i+1$ are merged, former pool $i+1$ is removed.}
It is easy to prove by induction that these updates guarantee that the value of a pool is indeed the average of the corresponding data points (see \ref{sec:induction}) without having to explicitly calculate it using Eq~(\ref{eq:pava}). The latter would be expensive for long pools, whereas merging two pools has $O(1)$ complexity independent of the pool lengths. 
With pooling the considered worst case situation results in a single pool and only its value and weight are updated at every step forward, yielding $O(T)$ complexity. Constructing the optimal solution $x_t$ for all $t$ in a final effort after the optimal pool partition has been reached is also $O(T)$. 
At convergence all constraints have been enforced; further note that convergence \rev{to the exact solution} occurs after a finite number of steps, \rev{in contrast to interior point-methods which only approach the optimal solution asymptotically}.

\subsection{Online Active Set method to Infer Spikes (OASIS)}

Now we adapt the PAVA approach to problem (\ref{eq:problem}). PAVA solves a regression problem subject to the constraint that the value at the current time bin must be greater than or equal to the last. The AR(1) model posits a more general but very similar constraint \rev{that bounds the rate of decay instead of enforcing monotonicity.} The key insight is that problem (\ref{eq:problem}) is a generalization of problem (\ref{eq:isotonic}): if $p=1$ in the AR model and we set $\gamma=1$ (we skip the index of $\gamma$ for a single AR coefficient) and $\lambda=0$ in Eq~(\ref{eq:problem}) we obtain Eq~(\ref{eq:isotonic}).  Therefore we focus first on the $p=1$ case and deal with $p>1$ and arbitrary calcium response kernels in the next section.

We begin by inserting the definition of $\hat{\bm s}$ (Eq~\ref{eq:problem}). Using that $\hat{\bm s}$ is constrained to be non-negative yields for the sparsity penalty 
\begin{equation}
\lambda\|\hat{\bm s}\|_1 = \lambda \bm 1\!^\top\!\hat{\bm s} =
\lambda\sum_{t=1}^{T} \sum_{k=1}^{T} G_{k,t}\hat{c}_t=
 \lambda\sum_{t=1}^{T}(1-\gamma+\gamma\delta_{tT}) \hat{c}_t =
\sum_{t=1}^{T} \mu_t \hat{c}_t = \bm\mu\!^\top\!\hat{\bm c} \label{eq:penalty}
\end{equation}
with $\mu_t:= \lambda(1-\gamma+\gamma\delta_{tT})$ (with $\delta$ denoting Kronecker's delta) by noting that the sum of the last column of $G$ is $1$, whereas all other columns sum to $(1-\gamma)$. 

Now the problem
\begin{equation}
\underset{\hat{\bm c}}{\textrm{minimize}}\quad \frac{1}{2}\sum_{t=1}^T (\hat{c}_t-y_t)^2 +  \sum_{t=1}^{T} \mu_t \hat{c}_t \quad
\textrm{subject to\quad}  \hat{c}_{t+1}\ge \gamma \hat{c}_t \ge 0 \quad \forall t \label{eq:AR1problem}
\end{equation}
shares some similarity to isotonic regression with the constraint  $\hat{c}_{t+1} \ge \hat{c}_{t}$ (Eq~\ref{eq:isotonic}). However, our constraint $\hat{c}_{t+1} \ge \gamma \hat{c}_{t}$ bounds the rate of decay instead of enforcing monotonicity.  Thus we need to generalize PAVA to handle the additional factor $\gamma$.  

For clarity we mimic our approach from the last section:
we first present the algorithm in a way that conveys its core ideas, and then improve the algorithm's efficiency using pools.
We introduce temporary values $\bm c'$ and initialize them to the unconstrained least squares solution, $\bm c'=\bm y-\bm\mu$. Starting at $t=2$ one moves forward until a violation of the constraint $c_{\tau}' \ge \gamma c_{\tau-1}'$ at some time $\tau$ is detected (Fig~\ref{fig:illustrate}A).
Updating the two time steps by minimizing $\tfrac{1}{2}(y_{\tau-1}-c_{\tau-1}')^2 + \tfrac{1}{2}(y_{\tau}-\gamma c_{\tau-1}')^2 + \mu_{\tau-1} c_{\tau-1}' +  \mu_{\tau} \gamma c_{\tau-1}'$ yields an updated value $c_{\tau-1}'$. However, this updated value can violate the constraint $c_{\tau-1}'\ge \gamma c_{\tau-2}'$ and we need to update $c_{\tau-2}'$ as well, etc., until we have backtracked some $\Delta t$ steps to time $t'=\tau-\Delta t$ where the constraint $c_{t'}'\ge \gamma c_{t'-1}'$ is already valid. At most one needs to backtrack to the most recent spike, because $c_{t'}' > \gamma c_{t'-1}'$ at spike times $t'$ (Eq~\ref{eq:dynamics}).
Solving
\begin{equation}
\underset{c_{t'}'}{\rm minimize}\quad \frac{1}{2}\sum_{t=0}^{\Delta t} (\gamma^t c_{t'}' - y_{t+t'})^2 +  \sum_{t=0}^{\Delta t} \mu_{t+t'} \gamma^t c_{t'}' 
\label{eq:AR1pool}
\end{equation}
by setting the derivative to zero yields
\begin{equation}
c_{t'}' = \frac{\sum_{t=0}^{\Delta t}  (y_{t+t'} -\mu_{t+t'})\gamma^t}{\sum_{t=0}^{\Delta t}\gamma^{2t}} 
 \label{eq:AR1}
\end{equation}
and the next values are updated according to $c_{t'+t}' = \gamma^t c_{t'}'$ for $t=1, ..., \Delta t$. Note the similarity of Eq~(\ref{eq:pava}) and (\ref{eq:AR1}), which differs by weighting the summands by powers of $\gamma$ due to the altered constraints, and by subtracting $\bm\mu$ from the data $\bm y$ due to the sparsity penalty.
(Along the way it is worth noting that, because a spike induces a calcium response described by kernel $\bm h$ with components $h_{1+t}=\gamma^t$, $c_{t'}'$ could be expressed in the more familiar regression form as $\frac{\bm h_{1:\Delta t+1}^\top (\bm y-\bm\mu)_{t':\tau}}{\bm h_{1:\Delta t+1}^\top \bm h_{1:\Delta t+1}}$, where we used the notation $\bm v_{i:j}$ to describe a vector formed by components $i$ to $j$ of $\bm v$.)
Now one moves forward again (Fig~\ref{fig:illustrate}C-E) until detection of the next violation (Fig~\ref{fig:illustrate}E), backtracks again to the most recent spike (Fig~\ref{fig:illustrate}G), etc. Once the end of the time series is reached (Fig~\ref{fig:illustrate}I) we have found the optimal solution and set $\hat{\bm c}=\bm c'$. 

\begin{figure}
\includegraphics[width=\textwidth]{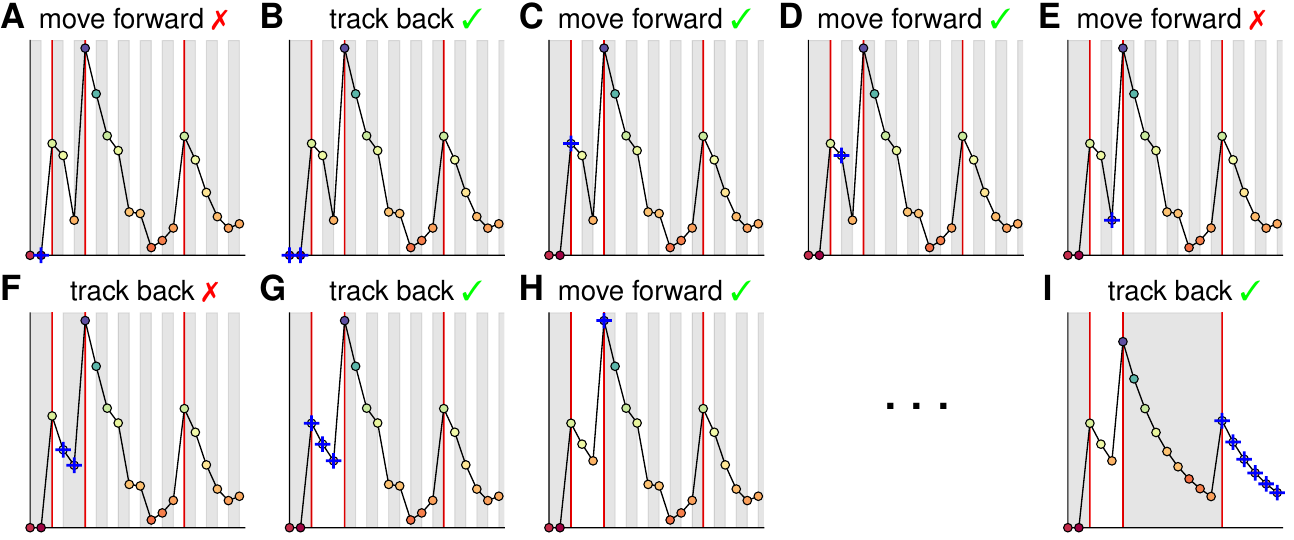}
\caption{{\bf Illustration of OASIS for an AR(1) process (see \nameref{Video S2}).} Red lines depict true spike times. The shaded background shows how the time points are gathered in pools. The pool currently under consideration is indicated by the blue crosses. A constraint violation is  encountered for the second time step \textbf{(A)} leading to backtracking and merging \textbf{(B)}. The algorithm proceeds moving forward \textbf{(C-E)} until the next violation occurs \textbf{(E)} and triggers backtracking and merging \textbf{(F-G)} as long as constraints are violated. When the most recent spike time has been reached \textbf{(G)} the algorithm proceeds forward again \textbf{(H)}. The process continues until the end of the series has been reached  \textbf{(I)}. The solution is obtained and pools span the inter-spike-intervals.
}
\label{fig:illustrate}
\end{figure}

While this yields a valid algorithm, it frequently updates each value $c_t'$ and recalculates the full sums in Eq~(\ref{eq:AR1}) for each step of backtracking. 
A similar algorithm has been suggested by \cite{podgorski2013} for the problem without sparsity penalty. However, it passes through the time series in reverse direction, from its end to its beginning, and is thus not applicable to online processing. It considers directly the deconvolved activity $\hat{\bm s}$ and efficiently does not update all time steps but only suspected spike times. However, their algorithm uses the inefficient updates of Eq~\ref{eq:AR1}, rendering it an $O(T^2)$ algorithm.

\rev{As in PAVA, next we introduce ``pools" into the algorithm; these are of critical importance in order to obtain a true $O(T)$ algorithm.  In PAVA  these
pools serve as sufficient statistics summarizing the data between jumps in the estimated output $x_t$; here the pools summarize the data between estimated spike times, where the estimated calcium signal $\hat c_t$ jumps.
Pools are} now tuples of the form $(v_i, w_i, t_i, l_i)$ with value $v_i$, weight $w_i$, event time $t_i$ and pool length $l_i$. Here we explicitly track the pool length, which was identical to its weight for PAVA. \rev{Initially the ordered set of pools is empty. During the forward sweep through the data the next data point $y_t$ is initialized as its own pool $(y_t-\mu_t, 1, t, 1)$ and appended to the set of pools.}
During backtracking pools get combined and only the first value $v_i=c_{t_i}'$ is explicitly considered, while the other values are merely defined implicitly via $c_{t+1}'=\gamma c_{t}'$. The constraint $c_{t+1}'\ge \gamma c_{t}'$ translates to 
$v_{i+1}\ge \gamma^{l_i} v_{i}$
as the criterion determining whether pools need to be combined. The introduced weights allow efficient value updates whenever pools are merged by avoiding recalculating the sums in Eq~(\ref{eq:AR1}). Values are updated according to
\begin{equation}
 v_{i}\leftarrow \frac{w_{i}v_{i}+\gamma^{l_i} w_{i+1} v_{i+1}}{w_{i}+\gamma^{2l_i} w_{i+1}}\label{eq:update_v}
 \end{equation}
where the denominator is the new weight of the pool and the pool lengths are summed
 \begin{align}
w_{i} &\leftarrow w_{i}+\gamma^{2l_i} w_{i+1} \label{eq:update_w}\\
 l_i &\leftarrow l_i + l_{i+1} \label{eq:update_I}
 \end{align}
Whenever pools $i$ and $i+1$ are merged, former pool $i+1$ is removed. It is easy to prove by induction that the updates according to Eqs~(\ref{eq:update_v}-\ref{eq:update_I}) guarantee that Eq~(\ref{eq:AR1}) holds for all values (see section \ref{sec:induction} in the appendix). 

Analogous to PAVA, the updates solve Eq~(\ref{eq:AR1problem}) not just greedily but optimally, finding the exact solution to the convex problem in $O(T)$.
\rev{One important point (especially relevant for online use) is that the computation time per observation timestep is not fixed but random, since we might have to backtrack to update an unpredictable number of pools. We found empirically that, over a wide range of hyperparameters, in about 80\% of the cases 0-1 merge operation was performed per observation timestep. With less than 0.5\% probability four or more merges were necessary; in all our experiments so far, never more than seven merges were needed.}

The final algorithm is summarized in Algorithm \ref{alg:AR1} and illustrated in Fig~\ref{fig:illustrate} as well as in  \nameref{Video S2}. Comparing Algorithm \ref{alg:PAVA} with \ref{alg:AR1} clearly reveals the modifications made and shows that for $\gamma=1$ and $\lambda=0$ the algorithm reduces to PAVA.

\begin{algorithm}\small
\caption{Fast online deconvolution algorithm for AR1 processes with positive jumps}\label{alg:AR1}
\begin{algorithmic}[1]
\Require decay factor $\gamma$, regularization parameter $\lambda$, \rev{data $y_t\in\bm y$ at time of reading}
\State \rev{initialize set of pools $\mathcal{P} \leftarrow \{\}$, time index $t\leftarrow 0$, pool index $i\leftarrow 0$}, solution $\hat{\bm s}\leftarrow\bm0$
\For{\rev{$y$ in $\bm y$}} \Comment{\rev{read next data point $y$}}
	\State \rev{$t\leftarrow t+1$}
	\State \rev{$\mathcal{P}\leftarrow \mathcal{P}\cup \{(y-\lambda(1-\gamma+\gamma\delta_{tT}), 1, t, 1)\}$ 
    		\Comment{add pool $(v_{i+1}, w_{i+1}, t_{i+1}, l_{i+1})$}}
	\While{$i>0$ and $v_{i+1} < \gamma^{l_{i}} v_{i}$} \Comment{merge pools if necessary}
     	\State $ \mathcal{P}_{i} \leftarrow \left(\frac{w_{i}v_{i}+\gamma^{l_{i}} w_{i+1} v_{i+1}}{w_{i}+\gamma^{2l_{i}} w_{i+1}}, w_{i}+\gamma^{2l_{i}} w_{i+1}, t_i, l_{i} + l_{i+1}\right) $ \Comment{Eqs~(\ref{eq:update_v}-\ref{eq:update_I})}
		\State remove $\mathcal{P}_{i+1}$ 
        \State $i \leftarrow i-1$
    \EndWhile 
    \State $i \leftarrow i+1$
\EndFor
\For{$(v,w,t,l)$ in $\mathcal{P}$} \Comment{construct solution for all $t$\textsuperscript{$\dagger$}}
   \For{$\tau = 0, ..., l-1$}
		$\hat{c}_{t+\tau} \leftarrow \gamma^{\tau} \max(0,v)$ \Comment{enforce $\hat{c}_t\ge 0$ via $\max$}
    \EndFor\rev{
    \If{$t>1$} 
    	$\hat{s}_{t}\leftarrow \hat{c}_{t}-\gamma\hat{c}_{t-1}$
    \EndIf}
\EndFor
\State \textbf{return} $\hat{\bm c},\rev{\hat{\bm s}}$
\vspace*{-.7\baselineskip}\Statex\hspace*{\dimexpr-\algorithmicindent-2pt\relax}\rule{\textwidth}{0.4pt}%
 \end{algorithmic} 
 \rev{\textsuperscript{$\dagger$}\footnotesize For online estimates of $\hat{\bm c}$ and $\hat{\bm s}$ the solution can be constructed within the loop over $\bm y$ not just after it.}
\end{algorithm}

\subsubsection{Dual formulation with hard  noise constraint}

The formulation above contains a troublesome free sparsity parameter $\lambda$ (implicit in $\bm\mu$). A more robust deconvolution approach \rev{chooses the sparsity implicitly} by  inclusion of the residual sum of squares (RSS) as a hard constraint and not as a penalty term in the objective function~\cite{pnevmatikakis2016}.
The expected RSS satisfies $\langle\|\bm c-\bm y\|^2\rangle = \sigma^2T$ and by the law of large numbers $\|\bm c-\bm y\|^2\approx \sigma^2T$ with high probability, leading to the constrained problem

\begin{equation}
\underset{\hat{\bm c}\rev{,\hat{\bm s}}}{\textrm{minimize}}\quad  \|\hat{\bm s}\|_1\quad
\textrm{subject to\quad}    \hat{\bm s}=G\hat{\bm c}\ge 0 \quad{\rm and }\quad
\|\hat{\bm c}-\bm y\|^2 \le \hat{\sigma}^2 T.
\label{eq:hard problem}
\end{equation}
(As noted above, we estimate $\sigma$ using the power spectral estimator described in~\cite{pnevmatikakis2016}; see also~\cite{deneux2016} for a similar approach.)

We will solve this problem by increasing $\lambda$ in the dual formulation until the noise constraint is tight. 
We start with some small $\lambda$, e.g.\ $\lambda=0$, to obtain a first partitioning into pools $\mathcal{P}$, cf.\ Fig~\ref{fig:opt}A. From Eqs~(\ref{eq:AR1}-\ref{eq:update_w})  
along with the definition of $\bm\mu$ (Eq~\ref{eq:penalty}) it follows that given the solution $(v_i,w_i,t_i,l_i)$, where 
\begin{equation}
v_i = \frac{\sum_{t=0}^{l_i-1}  (y_{t_i+t} -\mu_{t_i+t})\gamma^t}{\sum_{t=0}^{l_i-1}\gamma^{2t}} = \frac{\sum_{t=0}^{l_i-1}  (y_{t_i+t} - \lambda(1-\gamma+\gamma\delta_{t_i+t,T}))\gamma^t}{w_i}
\end{equation}
 for some $\lambda$, the solution $(v_i',w_i',t_i',l_i')$ for $\lambda+\Delta\lambda$ is
\begin{equation}
v_i' = v_i - \Delta\lambda\frac{\sum_{t=0}^{l_i-1} (1-\gamma+\gamma\delta_{t_i+t,T}) \gamma^t}{w_i}
  = v_i - \Delta\lambda\frac{1-\gamma^{l_i}(1-\delta_{iz})}{w_i} \label{eq:v(lambda)}
\end{equation} 
where $z$ is the index of the last pool and because pools are updated independently we make the approximation that no changes in the pool structure occur.
Inserting Eq~(\ref{eq:v(lambda)}) into the noise constraint (Eq~\ref{eq:hard problem}) results in
\begin{equation}
\sum_{i=1}^z\sum_{t=0}^{l_i-1} \left(\left(v_i - \Delta\lambda\frac{1-\gamma^{l_i}(1-\delta_{iz})}{w_i}\right)\gamma^t - y_{t_i+t}\right)^2 = \hat{\sigma}^2T\label{eq:update_lam}
\end{equation} 
and solving the quadratic equation for $\Delta\lambda$ yields
\begin{equation}
\Delta\lambda = \frac{-\beta+\sqrt{\beta^2-4\alpha\epsilon}}{2\alpha}\label{eq:Delta_lam}
\end{equation} 
with
$\alpha = \sum_{i,t}\xi_{it}^2$,
$\beta = 2\sum_{i,t}\chi_{it} \xi_{it}$ and
$\epsilon = \sum_{i,t}\chi_{it}^2 - \hat{\sigma}^2T$
where
$\xi_{it}=\frac{1-\gamma^{l_i}(1-\delta_{iz})}{w_i}\gamma^t$
and
$\chi_{it}=y_{t_i+t} - v_i\gamma^t$.

\begin{figure}
\includegraphics[width=\textwidth]{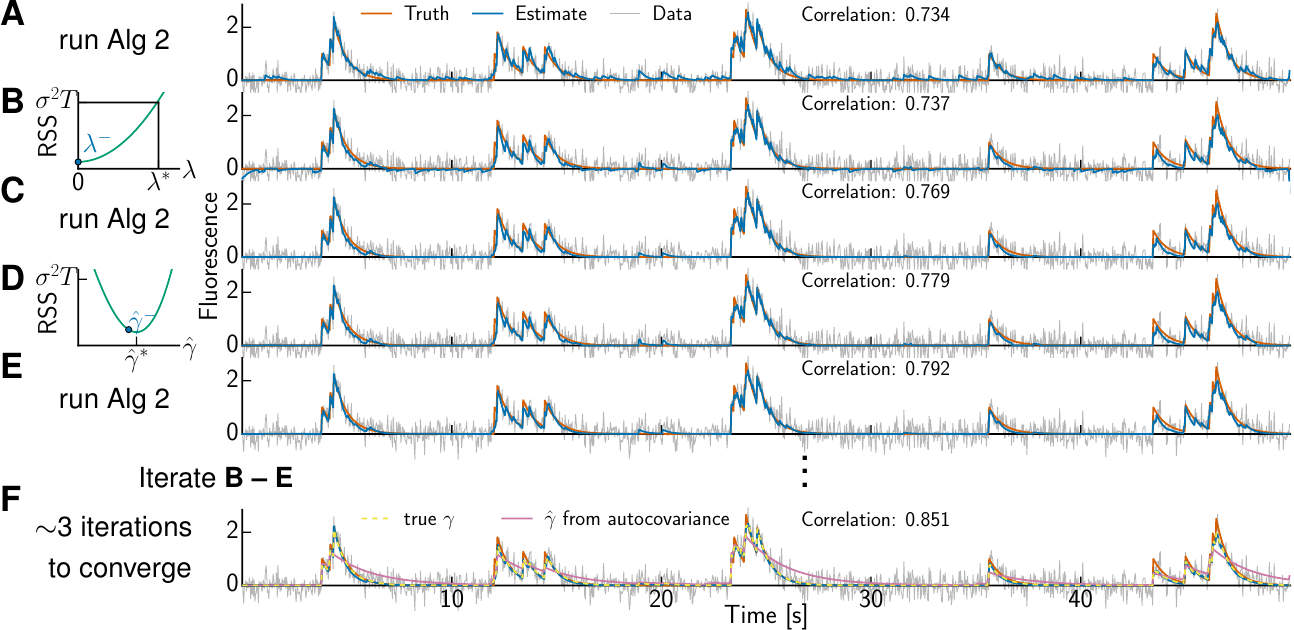}
\caption{{\bf Optimizing sparsity parameter $\bm\lambda$ and AR coefficient $\bm\hat{\gamma}$.}
\textbf{(A)} Running the active set method, with conservatively small estimate $\hat{\gamma}$, yields an initial {\it denoised} estimate (blue) of the data (\rev{gray}) roughly capturing the truth (red). We also report the correlation between the {\it deconvolved} estimate and true spike train as a direct measure for the accuracy of spike train inference.
\textbf{(B)} Updating sparsity parameter $\lambda$ according to Eq~(\ref{eq:update_lam}) such that RSS $=\sigma^2T$ (left) shifts the current estimate downward (right, blue). 
\textbf{(C)} Running the active set method enforces the constraints again and is fast due to warm-starting. 
\textbf{(D)} Updating $\hat{\gamma}$ by minimizing the polynomial function RSS($\hat{\gamma}$) and 
\textbf{(E)} running the warm-started active set method completes one iteration, which yields already a decent fit. \textbf{(F)} A few more iterations improve the solution further. The obtained estimate \rev{(blue)} is hardly distinguishable from the one obtained with known true $\gamma$ (\rev{yellow dashed trace, plotted in addition to the traces in A-E, is on top of blue solid line}). Note that determining $\hat{\gamma}$ based on the autocovariance (\rev{additionally plotted purple trace}) yields a crude solution that even misses spikes (at 24.6\,s and 46.5\,s).}
\label{fig:opt}
\end{figure}

The solution $\Delta\lambda$ provides a good approximate proposal step for updating the pool values $v_i$ (using Eq~\ref{eq:v(lambda)}).  Since this update proposal is only approximate it can give rise to violated constraints (e.g., negative values of $v_i$. To satisfy all constraints Algorithm \ref{alg:AR1} is run to update the pool structure, cf.\ Fig~\ref{fig:opt}C, but with a \emph{warm start}: we initialize with the current set of merely $z$ pools $\mathcal{P}'$ instead of the $T$ pools for a cold start (Alg~\ref{alg:AR1}, line \rev{4}).  This step returns a set of $v_i$ values that satisfy the constraints and may merge pools (i.e., delete spikes); then the procedure (update $\lambda$ then rerun the warm-started Algorithm \ref{alg:AR1}) can be iterated until no further pools need to be merged, at which point the procedure has converged.  In practice this leads to an increasing sequence of $\lambda$ values (corresponding to an increasingly sparse set of spikes), and no pool-split (i.e., add-spike) moves are necessary. (Note that it is possible to cheaply detect any violations of the KKT conditions in a candidate solution; if such a violation is detected, the corresponding pool could be split and the warm-started Algorithm \ref{alg:AR1} run locally near the detected violations. However, as we noted, due to the increasing $\lambda$ sequence we did not find this step to be necessary in the examples examined here.)

This warm-starting approach brings major speed benefits: 
after the residual is updated following a $\lambda$ update, the computational cost of the algorithm is linear in the number of pools $z$, hence warm starting drastically reduces computational costs from $k_1 T$ to $k_2 z$ with proportionality constants $k_1$ and $k_2$: if no pool boundary updates are needed then after warm starting the algorithm only needs to pass once through all pools to verify that no constraint is violated, whereas a cold start might involve a couple passes over the data to update pools, so  $k_2$ is typically significantly smaller than $k_1$, and $z$ is typically much smaller than $T$ (especially in sparsely-spiking regimes).

\subsubsection{Additional baseline}\label{sec:baseline}

For ease of exposition we thus far assumed no offsetting baseline. Adding a known baseline $b\neq0$ the problem reads  
\begin{equation}
\underset{\hat{\bm c}\rev{,\hat{\bm s}}}{\rm minimize}\quad \tfrac{1}{2}\|b\bm 1+\hat{\bm c}-\bm y\|^2 +  \lambda\|\hat{\bm s}\|_1
\quad  \textrm{subject to\quad}  \hat{\bm s}=G\hat{\bm c}\ge 0. \label{eq:problem+b}
\end{equation}
For known baseline one merely needs to initialize OASIS by subtracting not only the sparsity parameter $\bm\mu(\lambda)$ from the data $\bm y$, cf.\ Eq~(\ref{eq:AR1}) and Algorithm \ref{alg:AR1}, but also the baseline $b$. The fluorescence $\hat{\bm c}$ depends only on the sum $\bm\phi=b\bm 1+\bm\mu$.

If the baseline is not known, we want to optimize it too by solving the noise constrained dual problem
\begin{equation}
\underset{\hat{b},\hat{\bm c}\rev{,\hat{\bm s}}}{\textrm{minimize}}\quad  \|\hat{\bm s}\|_1\quad
\textrm{subject to\quad}    \hat{\bm s}=G\hat{\bm c}\ge 0 \quad{\rm and }\quad
\|\hat{b}\bm 1+\hat{\bm c}-\bm y\|^2 \le \hat{\sigma}^2T.
\end{equation}
We denote all except the differing last component of $\bm\mu$ by $\mu=\lambda(1-\gamma)$ (Eq~\ref{eq:penalty}) and of $\bm\phi$ by $\phi=b+\lambda(1-\gamma)$. $\phi$ is the total shift applied to the data (except for the last time step) due to the baseline and sparsity penalty before running OASIS. We increase $\phi$ until the noise constraint is tight. $\phi$ can be initialized by $\min y_t$ or better by a small percentile of $\bm y$, e.g.\  $15\%$. Once OASIS has been run with some  $\phi$ the baseline $\hat{b}$ is obtained by minimizing the objective (\ref{eq:problem+b}) with respect to it, yielding $\hat{b}=\langle\bm y-\hat{\bm c}\rangle=\frac{1}{T}\sum_{t=1}^T(y_t-\hat{c}_t)$, and the sparsity parameter is $\mu=\phi-\hat{b}$.   
\rev{Appropriately adding $\hat{b}$ to Eq~(\ref{eq:update_lam})
\begin{equation}
\sum_{i=1}^z\sum_{t=0}^{l_i-1} \left(\left(v_i - \Delta\phi\frac{1-\gamma^{l_i}(1-\delta_{iz})}{(1-\gamma)w_i}\right)\gamma^t - y_{t_i+t} +\hat{b} \right)^2 = \hat{\sigma}^2T\label{eq:update_lam+b}
\end{equation}
and plugging the analytic expression 
$\hat{b}=\frac{1}{T}\sum_{t=1}^T(y_t-\hat{c}_t) =  \frac{1}{T}\sum_{j=1}^z\sum_{\tau=0}^{l_j-1}\left(y_{t_j+\tau}-\left(v_j - \Delta\phi\frac{1-\gamma^{l_j}(1-\delta_{jz})}{(1-\gamma)w_j}\right)\gamma^\tau\right) $
 into Eq~(\ref{eq:update_lam+b}) to account for the changing baseline, we obtain an estimate of $\Delta\phi$ using a block coordinate update of $\phi$ and $\hat{b}$.}
Solving the ensuing quadratic equation 
for $\Delta\phi$, yields
\begin{equation}
\Delta\phi = \frac{-\beta+\sqrt{\beta^2-4\alpha\epsilon}}{2\alpha} \label{eq:Delta_phi}
\end{equation} 
with $\alpha = \sum_{i,t}\xi_{it}^2$,\;
$\beta = 2\sum_{i,t}\chi_{it} \xi_{it}$\; and\;
$\epsilon = \sum_{i,t}\chi_{it}^2 - \hat{\sigma}^2T$
where 
$\xi_{it}=\frac{1-\gamma^{l_i}(1-\delta_{iz})}{(1-\gamma)w_i}\gamma^t 
- \sum_j\frac{\left(1-\gamma^{l_j}(1-\delta_{jz})\right)^2}{T(1-\gamma)^2 w_j}$
and
$\chi_{it}=y_{t_i+t} - v_i\gamma^t -\frac{1}{T}\sum_{j,\tau}(y_{t_j+\tau}-v_j\gamma^\tau) $.
\rev{All pools are updated according to $v_i' = v_i - \Delta\phi\tfrac{1-\gamma^{l_i}(1-\delta_{iz})}{(1-\gamma)w_i}$, cf.\ Eq~(\ref{eq:v(lambda)}). To satisfy all constraints Algorithm \ref{alg:AR1} is run, warm-started by initializing with the current set of pools.}

\subsubsection{Optimizing the AR coefficient}
Thus far the parameter $\gamma$ has been known or been estimated based on the autocovariance function. We can improve upon this estimate by optimizing $\hat{\gamma}$ as well, which is illustrated in Fig~\ref{fig:opt}. After updating $\lambda$ (and $\hat{b}$) followed by running Algorithm \ref{alg:AR1}, we perform a coordinate descent step in $\hat{\gamma}$ that minimizes the RSS, cf.\ Fig~\ref{fig:opt}D. The RSS as a function of $\hat{\gamma}$ is a high order polynomial, cf.\  Eq~(\ref{eq:AR1}), and we need to settle for numerical solutions 
\rev{of
\begin{equation}
\hat{\gamma} = \underset{\gamma}{\textrm{arg min}}\quad
\sum_{i=1}^z\sum_{t=0}^{l_i-1} \left(\hat{b} + \frac{\sum_{\tau=0}^{l_i-1}  (y_{t_i+\tau} -\mu_{t_i+\tau})\gamma^\tau}{\sum_{\tau=0}^{l_i-1}\gamma^{2\tau}} \gamma^t - y_{t_i+t}\right)^2. \label{eq:update_gamma}
\end{equation}
}We used Brent's method~\cite{brent1973} with bounds $0\le\hat{\gamma}<1$ to solve this problem. One iteration consists now of steps B-E in Fig~\ref{fig:opt}, while for known $\gamma$ only B-C were necessary. If optimizing the baseline too, we obtained better results by minimizing the RSS jointly with respect to $\hat{\gamma}$ and $\hat{b}$ using L-BFGS-B~\cite{nocedal2006} instead of keeping the baseline $\hat{b}$ fixed.

\subsubsection{Faster optimization of hyperparameters}
We have presented methods to estimate the hyperparameters $\lambda$, $b$  and $\gamma$, which require a handful of warm-started iterations of OASIS. To gain further speed-ups these parameters can be estimated on decimated data. When downsampling by a factor $k$, the average of $k$ subsequent frames is calculated, the noise $\hat{\sigma}$ divided by a factor $\sqrt{k}$ and the initial estimate of the AR coefficient scaled to $\hat{\gamma}^k$. Alternatively, one could estimate $\sigma$ and $\gamma$ based on the decimated data. Once the hyperparameters have been obtained, the corresponding inverse transformations are performed:  $\hat{\gamma}\rightarrow\hat{\gamma}^{\frac{1}{k}}$, $\hat{b}\rightarrow \hat{b}$ and $\lambda\rightarrow\lambda\frac{1-\hat{\gamma}}{1-\hat{\gamma}^{1/k}}$ such that the shrinkage $\mu=\lambda(1-\hat{\gamma})$ due to the penalty term stays invariant. The final run of OASIS on the full data is warm started using the solution obtained on the decimated data. Data points that are not in the proximity of a spike of the downsampled solution are already combined into large pools, instead of initializing each data point as its own separate pool. More precisely, if the deconvolved decimated data has positive values at times $\{t_i\}$, for deconvolving the full data time steps $\bigcup_i\{(k-1)t_i, ..., (k+1.5)t_i\}$ are initialized as individual pools, while the remaining time steps are pooled together into bigger pools, separated from each other by the individual ones, with values given by Eq~(\ref{eq:AR1}) and weights by its denominator.

In particular the estimation of the AR coefficient $\gamma$ is computationally burdensome, because it involves expensive repeated evaluations of the RSS in order to minimize it as function of $\hat{\gamma}$ (and $\hat{b}$). The computing time depends linearly on the number of pools $z$ and we gain further speed-ups by restricting the attention to merely a subset of pools. In particular, because $\gamma$ can be well estimated based on large isolated calcium events, we restrict the calculation of the RSS to the pools with largest product of value and length. A large value indicates a large event and a long pool an isolated event.
We present detailed results in the Results section, indicating that altogether we can save about an order of magnitude computation with the greatest savings obtained by reducing the optimization of $\hat{\gamma}$ from $O(z)$ to $O(1)$.

It is also worth noting that the hyperparameter estimation discussed above is performed in `batch' mode, not online.  However, once good hyperparameter values are obtained on a short initial batch we can switch into online mode (with the hyperparameters held fixed) and handle the remaining data in a stream.

\subsubsection{Hard shrinkage and $\ell_0$ penalty}

It is well-known that $\ell_1$ penalization results in ``soft-thresholding''~\cite{donoho1995}, in which small values are zeroed out and large values are shifted to lower values (where the size of this shift is proportional to the penalty $\lambda$). 
We can perform hard instead of soft thresholding (avoiding this shrinkage of large values) by replacing the sparsity penalty by a constraint on the minimum spike size $s_{\rm min}$.
The problem
\begin{equation}
\underset{\hat{\bm c}\rev{,\hat{\bm s}}}{\textrm{minimize}}\quad  \tfrac{1}{2}\|\hat{\bm c}-\bm y\|^2 \quad
\textrm{subject to\quad}  \hat{\bm s}=G\hat{\bm c} \textrm{ \;with \;}\hat{s}_t\ge s_{\rm min} \textrm{\; or \;} \hat{s}_t=0 \label{eq:thresh}
\end{equation}
is non-convex and we are not guaranteed to find the global minimum. However, we obtain a good local minimum by merely changing the condition to merge pools from $v_{i+1}<\gamma^{l_i} v_i$ to  $v_{i+1}<\gamma^{l_i} v_i + s_{\rm min}$, modifying lines 3 and 5 in Algorithm \ref{alg:AR1}. 

Now we must choose a value for $s_{\rm min}$.  In many cases we found that simply setting $s_{\rm min}$ as a small multiple of the noise level led to good results. If the scaling factor $a$ (Eq~\ref{eq:observation}) relating fluorescence to action potentials was known, we could properly normalize the spike train such that $\hat{s}_t=1$ corresponds to one spike and choose $s_{\rm min}=0.5$, or a slightly higher value to avoid splitting one spike into two of size $0.5$. However, often the factor is unknown or difficult to estimate, rendering the choice of $s_{\rm min}$ cumbersome. Analogous to the variation of $\lambda$, we can start with $s_{\rm min}=0$ and increase it until the RSS crosses the $\sigma^2 T$ threshold by sequentially removing the smallest `spike' and merging the pools it used to separate. By maximizing  $s_{\rm min}$ under the noise constraint we minimize the number of non-zero values of $\hat{\bm s}$. De facto, we try to find a parsimonious description of the data by minimizing the number of non-zero values of $\hat{\bm s}$, thus solving a sparsity problem with $\ell_0$ penalty:
\begin{equation}
\underset{\hat{\bm c}\rev{,\hat{\bm s}}}{\textrm{minimize}}\quad  \|\hat{\bm s}\|_0\quad
\textrm{subject to\quad}    \hat{\bm s}=G\hat{\bm c}\ge 0 \quad{\rm and }\quad
\|\hat{\bm c}-\bm y\|^2 \le \hat{\sigma}^2T
\label{eq:L0}
\end{equation}
Instead of sequentially removing the smallest `spike' we actually obtained the best performance by sequentially adding spikes at the highest values of the $\ell_1$-solution $\hat{\bm s}$ until the RSS is smaller than $\hat{\sigma}^2T$.  
While the updates resemble those of matching pursuit~\cite{mallat1993}, in practice we found that adding spikes at the positions suggested by the $\ell_1$-solution yields better results than matching pursuit (which adds spikes at positions that greedily lead to the highest RSS reduction per step). Specifically, we found that often matching pursuit cannot resolve spikes in close proximity, but instead results in erroneous placement of one big spike as an explanation for all nearby spikes.
Instead of merging pools we now need to split pools. Denoting the time where to add a spike by $t_s$, i.e.\ the time where the $\ell_1$-solution has its highest value after ruling out times where spikes have already been added, one searches for the pool $i$ in which it falls, i.e.\ $t_i<t_s<t_i+l_i$. Pool $i$ gets updated as  
$l_i'=t_s-t_i$, 
$w_i' = \sum_{t=0}^{l_i'-1} 
\gamma^{2t}$, and 
$v_i' = \sum_{t=0}^{l_i'-1}
 y_{t+t_i} \gamma^{t} / w_i'$, which follows directly from Eq~(\ref{eq:AR1}) with $\mu_t=0$.
All pool indices greater than $i$ are increased by one and a new pool is inserted after pool $i$ with 
$l_{i+1}'=l_i-l_i'$, 
$t_{i+1}'=t_s$,
$w_{i+1}' = \sum_{t=0}^{l_{i+1}'-1}
\gamma^{2t}$, and 
$v_{i+1}' = \sum_{t=0}^{l_{i+1}'-1}
 y_{t+t_s} \gamma^{t} / w_{i+1}'$. 
 
As is the case with all optimized hyperparameters, once we have obtained a decent estimate of $s_{\min}$ on an initial subset of the data we can switch back into online mode. In online mode our algorithm is typically faster than matching pursuit, since matching pursuit requires updating $O(\Delta)$ points of the residual with each update, where $\Delta$ is the length of the calcium transient (in number of frames).

\subsection{Generalization beyond the AR(1) case}

\subsubsection{A greedy solution for the AR(\rmfamily\textit{p}$>$1) processes}

An AR(1) process models the calcium response to a spike as an instantaneous increase followed by an exponential decay. This is a good description when the fluorescence rise time constant is small compared to the length of a time-bin, e.g.\ when using GCaMP6f~\cite{chen2013} with a slow imaging rate. For fast imaging rates and slow indicators such as GCaMP6s it is more accurate to explicitly model the finite rise time. Typically we choose an AR(2) process, though more structured responses (e.g. multiple decay time constants) can also be modeled with higher values for the order $p$. 

For an AR($p$) process the sparsity penalty $\lambda\|\hat{\bm s}\|_1$ can again be expressed as $\bm\mu\!^\top\!\hat{\bm c}$, because 
\begin{equation}
\lambda\|\hat{\bm s}\|_1 = \lambda\bm 1\!^\top\!\hat{\bm s} = \lambda\sum_{t=1}^{T} \sum_{k=1}^{T} G_{k,t}\hat{c}_t=
 \lambda\sum_{t=1}^{T}(1-\sum_{i=1}^{\hspace{-2ex}\min(p,T-t)\hspace{-2ex}}\gamma_i) \hat{c}_t=
\sum_{t=1}^{T} \mu_t \hat{c}_t = \bm\mu\!^\top\!\hat{\bm c}, \label{eq:penalty2}
\end{equation}
with $\mu_t:= \lambda(1-\sum_{i=1}^{\min(p,T-t)}\gamma_i)$, by evaluating the column sums of $G$. 
For $p>1$ the dynamics are no longer \rev{first-order} Markov and the next value depends not only on the current but on possibly multiple previous time steps. Now following along the lines of the previous section just leads to a greedy, approximate solution; we will present an exact algorithm later. We use matrix- and vector notation to describe the dynamics of $c_t$. Let the transition matrix $A$, multi time step calcium vectors $\bm \zeta_t$,  and vector $\bm e$  be defined as
\begin{equation}
A =  {\scriptsize
\begin{pmatrix}
    \gamma_1 & \gamma_2 & \hdots & \gamma_p  \\
    1 & 0 & \hdots & 0  \\
         \vdots &  \ddots & \ddots & \vdots \\
    0 &  \hdots & 1 & 0 
  \end{pmatrix}}
  \qquad
  \bm \zeta_t =   {\scriptsize
  \begin{pmatrix}
    c_t   \\    c_{t-1}   \\    \vdots  \\    c_{t-p+1}
  \end{pmatrix}}
    \qquad
  \bm e =  {\scriptsize
  \begin{pmatrix} 
   1   \\    0   \\    \vdots  \\   0
  \end{pmatrix}}
\end{equation}
The calcium dynamics is given by  $\bm \zeta_t = A\bm \zeta_{t-1} + s_t \bm e$.
Analogously to the AR(1) case we derive an algorithm that moves through the time series until it finds a violation of the constraint $c_\tau' \ge \bm e^\top\! A\bm \zeta_{\tau-1}'$ for some time $\tau$, updates $c_\tau'$ and $c_{\tau-1}'$, and backtracks further until the updates do not violate any constraints at previous time steps.  Note that we also implicitly have constraints on $\bm \zeta_t$, enforcing the fact that $\bm \zeta_{t+1}$ is mostly a time-shifted version of $\bm \zeta_t$.

Assuming we need to backtrack by $\Delta t$ steps and introducing again $t'=\tau-\Delta t$, the objective is to minimize $\sum_{t=t'}^{\tau} (\frac{1}{2}(c_t'-y_t)^2 + \mu_t c_t')$ with respect to $c_{t'}'$ under the active constraints  $\bm \zeta_{t} = A\bm \zeta_{t-1}$ for $t=t'+1,...,\tau$. Plugging in the constraints on the dynamics the objective reads
\begin{equation}
\underset{c_{t'}'}{\rm minimize}\quad \frac{1}{2}\sum_{t=0}^{\Delta t}(\bm e^\top\! A^t \bm\zeta_{t'}' - y_{t+t'})^2 
 + \sum_{t=0}^{\Delta t} \mu_{t+t'} \bm e^\top\! A^t \bm\zeta_{t'}'.
\end{equation}
Setting the derivative with respect to $c_{t'}'$ to zero and solving for $c_{t'}'$ yields
\begin{equation}
c_{t'}' = 
\frac{\sum_{t=0}^{\Delta t} \left(y_{t+t'}  - \mu_{t+t'} - \sum_{k=2}^p (A^t)_{1,k} c_{t'+1-k}'\right)(A^t)_{1,1} }
{\sum_{t=0}^{\Delta t} (A^t)_{1,1}^2} \label{eq:ARp}
\end{equation}
where $(A^t)_{1,1}^2$ denotes the square of the entry in the first row and column in the matrix obtained as $t$-th matrix power of $A$. Again, note that these entries describe the calcium kernel $\bm h$ with components $h_{1+t}=(A^t)_{1,1}$. Eq~(\ref{eq:ARp}) reduces to Eq~(\ref{eq:AR1}) for $p=1$ where $A$ is just a $1\!\times\!1$-matrix with entry $\gamma$. 
The next values are updated according to $c_{t'+t}' = \sum_{k=1}^p \gamma_k c_{t'+t-k}'$ for $t=1, ..., \Delta t$.

We derive again an efficient formulation of the algorithm using pools. Considering the denominator in Eq~(\ref{eq:ARp}) as a weight in analogy to the AR(1) case and calculating the weighted sum upon merging of pools is not valid for $p>1$ because in general $(A^t)_{1,1}(A^u)_{1,1} \ne (A^{t+u})_{1,1}$. Introducing pools is still useful as it allows us to keep track of only a small number of $p$ elements in each pool. While for the case of AR(1) we only kept track of each group's first element, we now keep track of the first as well as the $p-1$ last elements. In order to speed up the update in Eq~(\ref{eq:ARp}), we can precompute the powers of $A$ and store $(A^t)_{1,:}$ in memory. Note that only the powers up to the maximal inter-spike-interval are needed, which can be much smaller than $T$; of course, for very large values of $t$, $(A^t)_{1,:} \approx 0$, by the stability of $A$; thus for high powers the entries of $(A^t)_{1,:}$ can also be well approximated by a quickly computable exponential function or simply be truncated.
Analogous to the case $p=1$, we can also impose a constraint on the minimum spike size $s_{\rm min}$ at the expense of having to deal with a non-convex problem by merely changing the condition to merge pools from $v_{i+1}<(A^{l_i})_{1,1} v_i + (A^{l_i})_{1,2} u_{i-1}$ to $v_{i+1}<(A^{l_i})_{1,1} v_i + (A^{l_i})_{1,2} u_{i-1}+ s_{\rm min}$ where $v_i$ and $u_i$ denote the first and last value of pool $i$.

According to Eq~(\ref{eq:ARp}) the solution is a linear function of $\bm\mu$, and hence of $\lambda$. Thus the hard noise constraint for the RSS $\|\bm c-\bm y\|^2=\sigma^2T$ is a quadratic equation in $\lambda$, that can be solved analytically, under the assumption of invariant pool structure analogous to above case of AR(1), but involving more lengthy expressions which we \rev{state explicitly in the appendix (section \ref{sec:expressions})}.
Updating all pools independently according to Eq~(\ref{eq:ARp}) can give rise to violated constraints, requiring us to rerun the algorithm, warm-started by initializing with the current set of pools, as described above. After 2-3 iterations no pools need to be merged and the final solution has been found.
We can again interleave an update step for optimizing the parameters $\gamma_i$, as described above.

\subsubsection{Online non-negative least squares (ONNLS)}\label{sec:ONNLS}

We noted above that Eq~(\ref{eq:ARp}) is not first-order Markovian: it includes a dependency on $p-1$ previous time steps and hence in general the previous pool. In updating only the first value within a pool and using the current values of the $p-1$ last values of the previous pool within the update Eq~(\ref{eq:ARp}), we actually performed greedy updates. These greedy updates can yield remarkably good results, in particular for long pools, such that the last value is already well constrained by a number of data points and hardly affected by the next pool.  Nonetheless, in some cases these greedy updates lead to errors in the timing of inferred activity, in particular when the rise time of the calcium response is slow compared to the frame rate.  The method described in this section can be used to correct these small errors.
\rev{It is again an active set method that can be run in online mode; however, the method introduced above is a \emph{dual} active set method, whereas here we describe a \emph{primal} active set method.
}

We begin by reformulating the problem as 
\begin{equation}
\underset{\hat{\bm s}}{\textrm{minimize}}\quad  \tfrac{1}{2}\|K\hat{\bm s}-\bm y\|^2 + \lambda \|\hat{\bm s}\|_1\quad
\textrm{subject to\quad}  \hat{\bm s}\ge 0 \label{eq:kern}
\end{equation}
where $K=G^{-1}$ is the convolution matrix with entries $K_{t,u}=h_{1+t-u}$ if $t\ge u$ else zero; the kernel vector $\bm h$ can be taken as an arbitrary response kernel for most of the development in this section. As noted earlier, $h_{1+t}=(A^t)_{1,1}$ for the special case of an AR process. As we have seen previously, the effect of the sparsity penalty (together with the non-negative constraint) is to shift the data down by a vector $\bm\mu = \lambda K^{-\top}\!\bm 1$, and the problem reduces to a non-negative least squares (NNLS) problem. 
\begin{equation}
\underset{\hat{\bm s}}{\textrm{minimize}}\quad  \tfrac{1}{2}\|K\hat{\bm s}-(\bm y-\lambda K^{-\top}\!\bm 1)\|^2\quad
\textrm{subject to\quad}  \hat{\bm s}\ge 0. \label{eq:nnls}
\end{equation}
(Note that the gradient of Eq~(\ref{eq:nnls}) is the same as the gradient of Eq~(\ref{eq:kern}), $K^\top(K\hat{\bm s}-(\bm y-\lambda K^{-\top}\!\bm 1)) = K^\top(K\hat{\bm s}-\bm y) + \lambda\bm 1$.  In addition, $K$ is triangular with positive numbers on the main diagonal, hence $\det K>0$ and $K$ is invertible.)

\begin{algorithm}[t!]\small
\caption{Fast online deconvolution for arbitrary convolution kernels}\label{alg:NNLS}
\begin{algorithmic}[1]
\Require kernel $\bm h$, regularization parameter $\lambda$, window size $\Delta$, shift size $\Delta_{\rm m}$, 
		\rev{data subset $\bm y_{t:t+\Delta-1} \subset \bm y$ at time of reading} 
\State initialize $K_{t,u}\leftarrow h_{1+t-u}$ for $1\le u\le t\le \Delta$, $\bm y \leftarrow\bm y-\lambda K^{-\top}\!\bm1$, $A \leftarrow K\!^\top\! K$, $t\leftarrow1$
\While{$t+\Delta\le T$}
        \State $\hat{\bm s}_{t:t+\Delta-1} \leftarrow$ \Call{NNLS}{$A, K\!^\top\! \bm y_{t:t+\Delta-1},\hat{\bm s}_{t:t+\Delta-1}$} \Comment{classic NNLS on $\hat{\bm s}_{t:t+\Delta-1}$, but warm-started\textsuperscript{$\dagger$}}
        \State $\bm y_{t:t+\Delta-1} \leftarrow \bm y_{t:t+\Delta-1}- K_{:,1:\Delta_{\rm m}}\hat{\bm s}_{t:t+\Delta_m-1}$ \Comment{peel off contribution of previous activity}
        \State $t \leftarrow t+\Delta_{\rm m}$
\EndWhile
\State $\hat{\bm s}_{t:T} \leftarrow$ \Call{NNLS}{$A_{t+\Delta-T:\Delta,t+\Delta-T:\Delta}, K_{1:T-t+1,1:T-t+1}^\top\bm y_{t:T},\hat{\bm s}_{t:T}$} \Comment{robustness to $\frac{T-\Delta}{\Delta_{\rm m}}\notin\mathbb{N}$}
\State \textbf{return} $\hat{\bm s}$
\vspace*{-.7\baselineskip}\Statex\hspace*{\dimexpr-\algorithmicindent-2pt\relax}\rule{\textwidth}{0.4pt}%
 \end{algorithmic} 
 \textsuperscript{$\dagger$}\footnotesize The function NNLS implements a minor variation of the classic algorithm of~\cite{bro1997} to solve $\min_{\hat{\bm s}\in\mathbb{R}_+^T}\|\bm y - K\hat{\bm s}\|^2$: $K\!^\top\! K$ and  $K\!^\top\!  \bm y$ are precomputed outside the function, to exploit that NNLS is called several times with the same $K$. Further $\hat{\bm s}$ is warm-started instead of initializing it as $\bm 0$.
\end{algorithm}

A classic algorithm for solving a NNLS problem is the active set method of~\cite{lawson1995} and~\cite{bro1997}.  This algorithm alternates between normal equation matrix solves involving sub-matrices of $K^\top\!K$ and updates of the active set.  A naive application of this algorithm would scale cubically with the number of spikes. Instead, we exploit the locality of the problem (the fact that changing a spike height at time $t$ does not affect the solution at very distant times $s$) and apply the NNLS algorithm in the inner loop of a sequential coordinate block descent method.  Specifically, we apply warm-started NNLS on blocks of size $\Delta$ (where $\Delta$ is the length of the calcium transient), stepping the block in steps of size $\Delta_{\rm m}< \Delta$ (we found $\Delta_{\rm m}=\frac{\Delta}{2}$ to be effective for offline applications; for online applications $\Delta_{\rm m}$ would be set smaller) and applying NNLS while holding the values of $s$ outside the block fixed.  We further exploit the Toeplitz structure of $K$ to precompute the necessary sub-matrices of $K^\top\!K$.  

The resulting algorithm (Alg \ref{alg:NNLS}) runs in $O(T)$ time. It involves solving a least squares problem for the time points within the considered window where $\hat{s}_t>0$; thus it scales cubically with the number of spikes per window and depends on the sparsity of $\hat{\bm s}$. (In fact, for AR($p$) models, the required matrix solves can be performed using linear-time (not cubic-time) Kalman filter-smoother methods, but the matrix sizes were sufficiently small in the examples examined here that the Kalman implementation was not necessary.) Further speedups can be obtained by restricting the set of possible spike times, for example, by running the AR(1) version of OASIS on a temporally decimated version of the signal to crudely identify the set of spike times, then never updating $\hat{s}$ away from zero on the complement of this set. 

\rev{
To summarize, we describe in Algorithm~\ref{alg:full} how the algorithmic variants introduced here are combined into a final full algorithm that includes hyperparameter optimization, the variants for AR(1) or AR(2), and soft ($\ell_1$ penalty) or hard shrinkage ($\ell_0$ penalty).}

\rev{
\begin{algorithm}\small
\caption{Full algorithm with hyperparameter optimization}\label{alg:full}
\begin{algorithmic}[1]
\Require data $\bm y$, order $p$ of the AR-process, sparsity norm $q$
\State initialize
\State \qquad AR parameters $\hat{\gamma}_1, ..., \hat{\gamma}_p$ using autocorrelation of $\bm y$
\State \qquad noise level $\hat{\sigma}$ using PSD of $\bm y$
\State \qquad background  $\hat{b}$ using percentile of $\bm y$
\State \qquad dual variable $\lambda\leftarrow0$
\State $\tilde{\bm y} \leftarrow$ temporally decimate batch of $\bm y$
\Comment{for faster hyperparameter optimization}
\State rescale hyperparamaters due to decimation
\While{hyperparamaters not converged} \Comment{optimize hyperparameters, cf.\ Fig~\ref{fig:opt}}
\State Run warm-started Alg~\ref{alg:AR1} on $\tilde{\bm y}$ with current hyperparameters
\State Update hyperparameters \Comment{Eqs~(\ref{eq:Delta_lam},\ref{eq:Delta_phi}, \ref{eq:update_gamma})}
\EndWhile
\If{$q=0$} 
	determine $s_{\min}$\Comment{Sec.\ `Hard shrinkage and $\ell_0$ penalty'}
\EndIf 
\State rescale hyperparamaters using the inverse transformations of line 7
\State $\hat{\bm c},\hat{\bm s}\leftarrow$ run Alg~\ref{alg:AR1} on full data $\bm y$
\If{$p=1$}
	\State \textbf{return} $\hat{\bm s}$
\Else
	\State $\hat{\bm s}\leftarrow$ run warm-started Alg~\ref{alg:NNLS} on full data $\bm y$
	\State \textbf{return} $\hat{\bm s}$
\EndIf
 \end{algorithmic} 
\end{algorithm}
}

\section{Results}

\subsection{Benchmarking OASIS}

We generated datasets of 20 fluorescence traces each for $p=1$ and $2$ with a duration of 100\,s at a framerate of 30\,Hz, such that $T=3,\!000$ frames. The spiking signal came from a homogeneous Poisson process. We used $\gamma=0.95$, $\sigma=0.3$ for the AR(1) model and $\gamma_1=1.7$, $\gamma_2=-0.712$, $\sigma=1$ for the AR(2) model. Figures \ref{fig:benchmark}A-C are reassuring that our suggested (dual) active set method yields indeed the same results as other convex solvers for an AR(1) process and that spikes are extracted well. For an AR(2) process OASIS is greedy and yields good results that are similar to the one obtained with convex solvers (lower panels in Fig~\ref{fig:benchmark}B and C), with virtually identical denoised fluorescence traces (upper panels). 

\begin{figure}[h!]
\includegraphics[width=\textwidth]{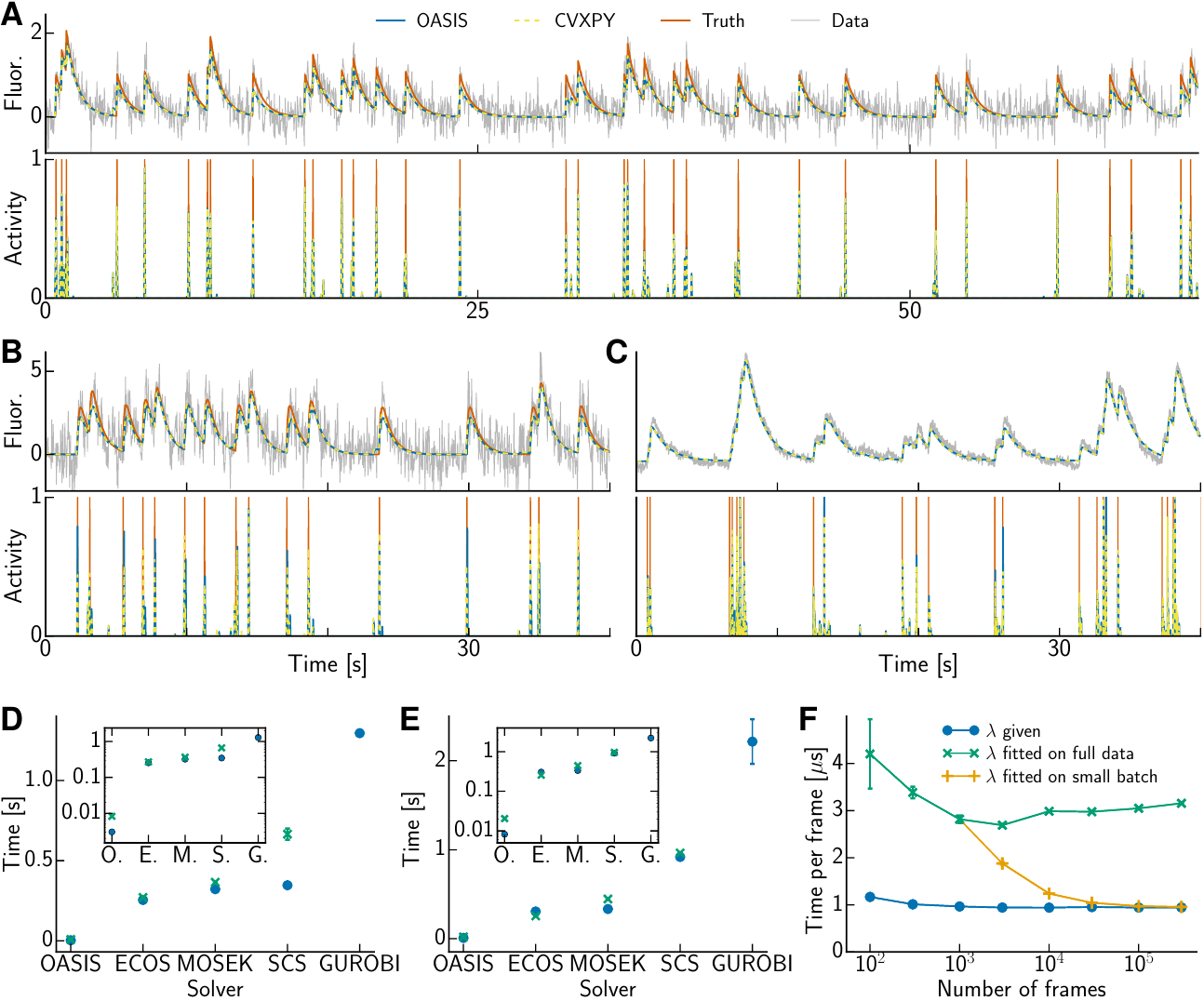}
\caption{{\bf OASIS produces the same high quality results as convex solvers at least an order of magnitude faster.} \textbf{(A)} Raw and inferred traces for simulated AR(1) data, \textbf{(B)} simulated AR(2) and \textbf{(C)} real data from~\cite{chen2013} \rev{fitted with an AR(2) model}. 
OASIS solves Eq~(\ref{eq:problem}) exactly for AR(1) and just approximately for AR(2) processes, nevertheless well extracting spikes.
\textbf{(D)} Computation time for simulated AR(1) data with given $\lambda$ (blue circles, Eq~\ref{eq:problem}) or inference with hard noise constraint (green x, Eq~\ref{eq:hard problem}). GUROBI failed on the noise constrained problem. \rev{The inset shows the same data in logarithmic scale.}  \textbf{(E)} Computation time for simulated AR(2) data. \textbf{(F)} Normalized computation time of OASIS for simulated AR(1) data with given $\lambda$ (blue circles, Eq~\ref{eq:problem}) and inference with hard noise constraint on full data (green x, Eq~\ref{eq:hard problem}) or small initial batch followed by processing in online mode (orange crosses).
}
\label{fig:benchmark}
\end{figure}

Figures \ref{fig:benchmark}D,E report the computation time ($\pm$SEM) averaged over all 20 traces and ten runs per trace on a MacBook Pro with Intel Core i5 2.7\,GHz CPU. We compared the run time of our algorithm to a variety of state of the art convex solvers that can all be conveniently called from the convex optimization toolbox CVXPY~\cite{cvxpy}: embedded conic solver (ECOS,~\cite{domahidi2013}), MOSEK~\cite{andersen2000}, splitting conic solver (SCS,~\cite{ODonoghue2016}) and GUROBI~\cite{gurobi2015}. 
\rev{ECOS and MOSEK are the most competitive methods; these are interior-point methods that cannot use warm starts.}
With a known sparsity parameter $\lambda$ (Eq~\ref{eq:problem}), OASIS is about two magnitudes faster than any other method for an AR(1) process (Fig~\ref{fig:benchmark}D, blue disks) and more than one magnitude for an AR(2) process (Fig~\ref{fig:benchmark}E). 
Whereas several of the other solvers take almost the same time for the noise constrained problem (Eq~\ref{eq:hard problem}, Fig~\ref{fig:benchmark}D,E, green x), our method takes about three times longer to find the value of the dual variable $\lambda$ compared to the formulation where the residual is part of the objective; nevertheless it still outperforms the other algorithms by a huge margin.

We also ran the algorithms on longer traces up to a length of $T=300,\!000$ frames (Fig~\ref{fig:benchmark}F), confirming that OASIS scales linearly with $T$, where we obtained a proportionality constant of $1\,\mu$s/frame. For an unknown hyperparameter $\lambda$ we  obtained its value not only on the full data but on an initial small batch (1,000 frames) and kept it fixed, which sped activity inference up by a factor of three once $T$ is sufficiently large (Fig~\ref{fig:benchmark}F, orange vs green) without compromising quality (correlation between \emph{deconvolved} activity and ground truth spike train $0.882\pm0.001$ vs $0.881\pm0.002$ for $T=300,\!000$).
Our active set method maintained its lead by 1-2 orders of magnitude in computing time. Further, compared to our active set method the other algorithms required at least an order of magnitude more RAM, confirming that OASIS is not only faster but much more memory efficient. Indeed, because OASIS can run in online mode the memory footprint can be $O(1)$, instead of $O(T)$.

We verified these results on real data as well.  
Running OASIS with the hard noise constraint and $p=2$ on the GCaMP6s dataset of 14,400 frames collected at 60\,Hz from ~\cite{chen2013,genie2015} took 0.101$\pm$0.005\,s per trace, whereas the fastest other methods required 2.37$\pm$0.12\,s.
Fig~\ref{fig:benchmark}C shows the real data together with the inferred denoised and deconvolved traces as well as the true spike times, which were obtained by simultaneous electrophysiological recordings~\cite{chen2013}. 

We also extracted each neuron's fluorescence activity using CNMF from an unpublished whole-brain zebrafish imaging dataset from the M. Ahrens lab. Running OASIS with hard noise constraint and $p=1$ (chosen because the calcium onset was fast compared to the acquisition rate of 2\,Hz) on 10,000 traces out of a total of 91,478 suspected neurons took 81.5\,s whereas ECOS, the fastest competitor, needed 2,818.1\,s. For all neurons we would hence expect 745\,s for OASIS, which is below the 1,500\,s recording duration (3,000 frames), and over 25,780\,s for ECOS and other candidates. 

OASIS solves the non-negative deconvolution problem exactly for an AR(1) process; however, as discussed above, for $p>1$ the solution is only a good (greedy) approximation. To obtain the exact solution we ran the ONNLS algorithm on the simulated AR(2) traces using a window size of 200 frames, which was about ten times larger than the fluorescence decay time, and shifting the window by 100 frames. We obtained higher accuracy results than all the state of the art convex solvers we compared to, requiring merely 27.8$\pm$0.4\,ms per trace for  $\lambda=0$ and 20.0$\pm$0.4\,ms per trace for $\lambda=30$, the value that ensures that the hard noise constraint is tight. The choice of $\lambda$ regulated the sparsity of the solution, which affects the run time of ONNLS.
The fastest state of the art convex solver (ECOS) required 305$\pm$9\,ms and was thus an order of magnitude slower. It took merely 8.56$\pm$0.04\,ms to obtain an approximate greedy solution using OASIS,  independent of the choice of sparsity parameter $\lambda$. Though obtaining the exact solution requires more computing time, it is well within the same order of magnitude. In contrast, running batch NNLS was significantly slower, requiring 2,430$\pm$53\,ms for $\lambda=0$ and 1,620$\pm$37\,ms for $\lambda=30$.
\rev{
Solving the noise constrained problem by iterating warm-started ONNLS to obtain the corresponding value of the dual variable $\lambda$ took 73$\pm$1\,ms.
However, we can improve on that by first running the fast but (for $p>1$) approximate dual method to obtain a good estimate of $\lambda$ as well as $\bm s$, and then switching to the slower but exact primal method. Running OASIS and executing warm-started ONNLS just once required collectively merely 23$\pm$1\,ms, similarly to cold-started ONNLS with given $\lambda$. Running ONNLS not just once, but until the value of $\lambda$ has been further tuned such that the noise constraint holds not approximately but exactly, took altogether 31$\pm$1\,ms.
}

\subsection{Hyperparameter optimization}

We have shown that we can solve Eq~(\ref{eq:problem}) and Eq~(\ref{eq:hard problem}) faster than interior point methods. The AR coeffient $\gamma$ was either known or estimated based on the autocorrelation in the above analyses. The latter approach assumes that the spiking signal comes from a homogeneous Poisson process, which does not generally hold for realistic data. Therefore we were interested in optimizing not only the sparsity parameter $\lambda$, but  also the AR(1) coeffient $\gamma$.
To illustrate the optimization of both, we generated a fluorescence trace with spiking signal from an inhomogeneous Poisson process with sinusoidal instantaneous firing rate (Fig~\ref{fig:opt}).
We conservatively initialized $\hat{\gamma}$ to a small value of $0.9$. The value obtained based on the autocorrelation was $0.9792$ and larger than the true value of $0.95$. The left panels in Figures \ref{fig:opt}B and D illustrate the update of $\lambda$ from the previous value $\lambda^-$ to $\lambda^*$  by solving a quadratic equation analytically (Eq~\ref{eq:update_lam}) and the update of $\hat{\gamma}$ by numerical minimization of a high order polynomial respectively. Note that after merely one iteration (Fig~\ref{fig:opt}E) a good solution is obtained and after three iterations the solution is virtually identical to the one obtained when the true value of $\gamma$ has been provided (Fig~\ref{fig:opt}F). This holds not only visually, but also when judged by the correlation between {\it deconvolved} activity and ground truth spike train, which was $0.869$ compared to merely $0.773$ if $\hat{\gamma}$ was obtained based on the autocorrelation. The optimization was robust to the initial value of $\hat{\gamma}$, as long as it was positive and not, or only marginally, greater than the true value.
The value obtained based on the autocorrelation was considerably greater and partitioned the time series into pools in a way that missed entire spikes. 

\begin{table}[b]
\caption{{\bf Cost and quality of spike inference with parameter optimization.}}
\begin{tabular*}{\textwidth}{|l@{\extracolsep{\fill}}l@{\extracolsep{\fill}}c@{\centering}c|}\hline
  optimize         			  & accelerate				&     		Time [ms] 				      &  Correlation\\\hline
       -    				  &    -   		    			& $\phantom{1}3.25\pm0.03$  		      & $0.831\pm0.006$\\
$\lambda$ 	  		  &    -   			 		& $\phantom{1}9.2\phantom{3}\pm0.1\phantom{0}$  & $0.849\pm0.006$\\
$\lambda,b$ 	  		  &    -  		 			& $\phantom{1}8.4\phantom{3}\pm0.2\phantom{0}$  & $0.857\pm0.007$\\
$\lambda,b,\gamma$ 	  &    -   		  			& $48.4\phantom{3}\pm2.3\phantom{0}$ & $0.875\pm0.006$\\
$\lambda,b,\gamma$	  & use 10 pools 	 		& $16.0\phantom{6}\pm0.4\phantom{0}$ & $0.875\pm0.006$\\
$\lambda,b,\gamma$	  & use \phantom{1}5 pools	& $14.2\phantom{3}\pm0.3\phantom{0}$ & $0.875\pm0.006$\\
$\lambda,b,\gamma$	  &    decimate   			& $29.4\phantom{5}\pm1.3\phantom{0}$ & $0.878\pm0.006$\\
$\lambda,b,\gamma$	  & decimate, use 10 pools 	& $12.1\phantom{3}\pm0.2\phantom{0}$ & $0.878\pm0.006$\\
$\lambda,b,\gamma$ 	  & decimate, use \phantom{1}5 pools	& $10.6\phantom{7}\pm0.2\phantom{0}$ & $0.877\pm0.006$\\\hline
\end{tabular*}
\vskip1ex
The first column shows the quantities that have been optimized, the second methods used to accelerate the parameter optimization, the third the computing time per trace ($\pm$SEM) and the last shows the performance of spike train inference using the correlation between inferred activity and true spike train. We used 20 simulated fluorescence traces with a spiking signal coming from an inhomogeneous Poisson process and a duration of 100\,s at a framerate of 30\,Hz such that $T=3,\!000$ frames.
\label{tab}
\end{table}

After illustrating the hyperparameter optimization 
we next quantify the computing time and quality of spike inference for various optimization scenarios. We generated 20 fluorescence traces with sinusoidal instantaneous firing rate as used in the illustration (Fig~\ref{fig:opt}), again having a duration of $100$\,s at a framerate of $30$\,Hz, such that $T=3,\!000$ frames, however we offset the data by an additional positive baseline $b$ that can be present in real data. This baseline can be optimized together with the sparsity parameter $\lambda$, as shown in Methods (subsection ``Additional baseline"). 
The fastest deconvolution method is to merely estimate all parameters and run OASIS just once, cf.\ first row in Table~\ref{tab} which shows the mean ($\pm$SEM) for computing time as well as correlation of the inferred spike train. As a baseline estimate we used the $15\%$ percentile of the fluorescence trace. The sparsity penalty was set to $\lambda = 0$. A better choice of $\lambda$ is actually obtained by optimizing it, such that the hard noise constraint $\|b\bm 1+ \hat{\bm c}-\bm y\|^2=\hat{\sigma}^2T$ holds, cf.\ second row in Table~\ref{tab}. The next rows show that optimizing $b$ further improves the result, as does adding $\gamma$. However, the increased number of optimized parameters results in extra computational cost. The computation time can be reduced by estimating $\gamma$ not using the full data but only a limited number of pools, which does not affect the quality of the result, cf.\  row five and six in Table~\ref{tab}. Note that by restricting the optimization to a fixed number of pools, its computational load does not increase with the duration of the recording, hence the gain would be even more dramatic for longer time series. 
Further speed ups are obtained by estimating the parameters on a decimated version of the data, as the last rows in Table~\ref{tab} illustrate. Here we decimated the fluorescence traces by a factor of ten, without harming the inference quality.

\subsection{Hard thresholding}

OASIS solves a LASSO problem resulting in soft shrinkage. The deconvolved trace $\hat{\bm s}$ typically has values smaller than $1$ and often shows ``partial spikes'' in neighboring bins reflecting the uncertainty regarding the exact position of a spike, cf.\ Fig~\ref{fig:benchmark}. While this information can be useful, one sometimes wants to merely commit to one event within a time bin instead and get rid of remaining small values in $\hat{\bm s}$.
We ran a slightly modified version of the algorithm that replaces the sparsity penalty by a constraint on the minimal spike size $s_{\rm min}$, yielding sparser solutions but rendering the problem non-convex. Although we are not guaranteed to find the global minimum, we obtained good results, cf.\ Fig~\ref{fig:thresh}. 
To quantify directly the similarity between the inferred deconvolved trace and ground truth spike train we calculated the correlation between the two. The best results were obtained  for $s_{\rm min}=0.5$ yielding correlation $0.899\pm0.009$ with the true spike train compared to $0.879\pm0.006$ for the solution of the problem with hard noise constraint (Eq~\ref{eq:hard problem}). However, in a practical application the scaling factor between calcium fluorescence and a single spike, which is $1$ for our simulated data, is often unknown, rendering it impossible to simply set the threshold $s_{\rm min}$ to the half of it. Instead, we can vary the threshold until the RSS crosses the threshold $\sigma^2T$. The order in which the pools are merged or split matters for this non-convex case and sequentially adding spikes at the highest values of the $\ell_1$-solution yielded the best performance with correlation $0.888\pm0.007$.
 
 \begin{figure}
\includegraphics[width=\textwidth]{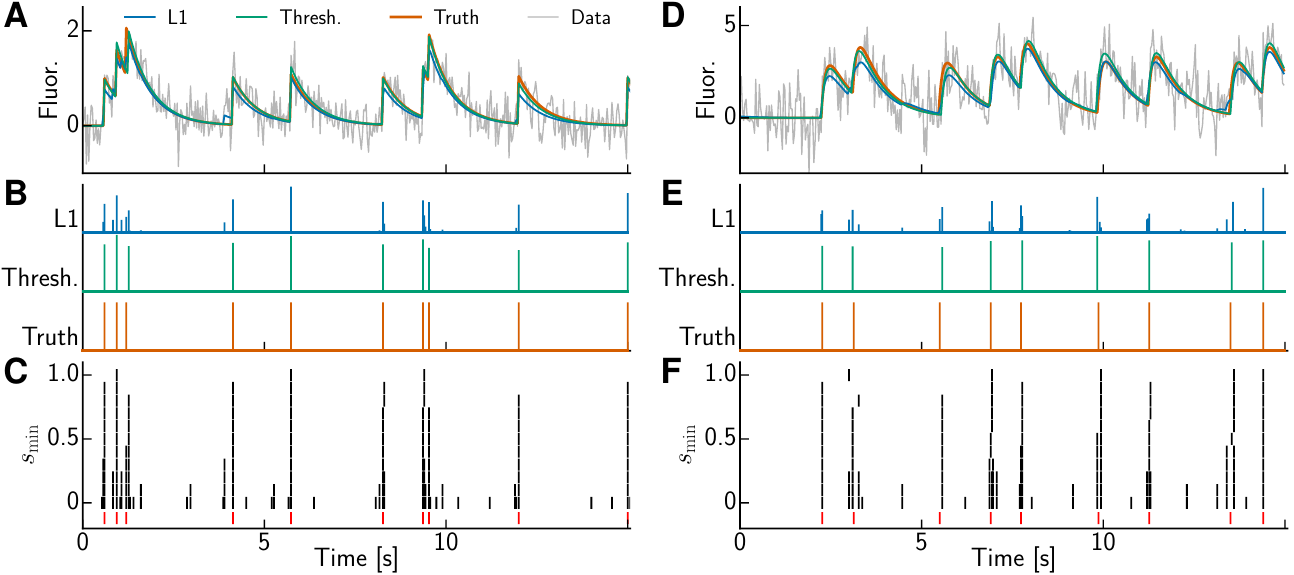}
\caption{{\bf Thresholding can improve the accuracy of spike inference.} \textbf{(A)} Inferred trace using L1 penalty (L1, blue) and the thresholded OASIS (Thresh., green). The data (\rev{gray}) are simulated with AR(1) model. \textbf{(B)} Inferred spiking activity. \textbf{(C)} The detected events using thresholded OASIS  depend on the selection of $s_{\min}$. The ground truth is shown in red. \textbf{(D,E,F)}, same as \textbf{(A,B,C)}, but the data are simulated with AR(2). }
\label{fig:thresh}
\end{figure}
 
Fig~\ref{fig:thresh} also shows results with a constraint on the minimal spike size for an AR(2) process. Adding  the constraint helps when pressed for a binary decision whether to assign a spike or not, yielding visually excellent results. However, with a finite rise time of the calcium response the onset detection is notoriously difficult, because for a low threshold there are a lot of false positives due to noise, whereas for a high threshold, closer to the peak of the calcium kernel, the onset has already occurred earlier. Indeed, the greedy method for an AR(2) process tends to register spikes too late, which is further exacerbated when a threshold on the spike size ($s_{\min}=0.5$) is introduced, leading to low values of spike similarity (correlation $0.419\pm0.016$) compared to the solution of basis pursuit denoising (Eq~\ref{eq:hard problem}) (correlation $0.497\pm0.013$). We can incorporate a correction step that whenever a new spike is added, slightly jitters the previous one and calculates the change in the optimization objective in order to determine the optimal placement of the spike. For simplicity and low computational burden, we restrict the consideration of the changing RSS to the pools prior and after the jittered spike, which improves the spike detection (correlation $0.462\pm0.015$) while only marginally increasing computational cost (from 8.65\,ms to 11.65\,ms). Further improvements can be obtained by following up with (O)NNLS. The solution obtained by OASIS with threshold on the minimal spike size and jittering can be used to restrict (O)NNLS to have non-zero values only in close proximity to the spikes of the greedily obtained solution. This processing step increased the performance of spike inference to correlation $0.530\pm0.010$, which is better than the already mentioned one obtained for exactly solving the convex problem (Eq~\ref{eq:hard problem}). Hence, though imposing a minimal spike size renders the problem non-convex, a tractable approximate solution to this problem can improve over the exact solution of the convex basis pursuit denoising problem. 

In the AR(2) case the exact solutions (ONNLS with $\lambda$ or ONNLS with support only in the proximity of the thresholded solution) consistently improved over the faster greedy methods, as measured by spike train correlation. The performance was hardly affected by whether the penalized or the thresholded version was chosen.  Spike train correlation harshly penalizes spikes that are detected but at an incorrect time, no matter how close; therefore the activity plots and correlation values convey somewhat complementary information about the quality of the inference. We attribute the performance gap between greedy and exact solutions to greedy methods missing the exact time step more often. However, the optimally attainable time resolution is already limited by low SNR, in particular if the rise time of the calcium indicator is finite. Indeed, being more lenient regarding the exact spike timing we calculated the correlations after convolving the spike trains with a Gaussian with standard deviation of one bin width. The correlation values increased to $0.731\pm0.008$ for the greedy thresholded solution and to $0.800\pm0.007$ if followed up by ONNLS, but did not increase further for wider Gaussian kernels. This indicates that in the considered SNR regime single time bin resolution is out of reach, but spike times can be inferred with an uncertainty of about one time bin width.

\subsection{Online spike inference with limited lag}

For an exact solution of the non-negative deconvolution problem of an AR(1) process OASIS needs to backtrack to the most recent spike. (For an AR(2) process the solution is greedy and merely approximate. ONNLS yields an exact solution in this case but considers an even wider time window.) Such delays could be too long for some interesting closed loop experiments; therefore we were interested in how well the method performs if backtracking is limited to just a few frames. We varied the lag in the online estimator, i.e.\ the number of future samples observed before assigning a spike at time zero, for different signal-to-noise ratios (SNR). 
For each lag we chose the sparsity parameter $\lambda$ such that the noise constraint $\|\hat{\bm c}-\bm y\|^2 \le \sigma^2 T$ was tight. This yielded increasing values of $\lambda$ for smaller lags, compensating for the fact that limiting backtracking to fewer frames also imposes fewer constraints ($\hat{c}_t\ge\gamma \hat{c}_{t-1}$) on the dynamics. In the case of hard thresholding, better results were obtained with higher $s_{\min}$ for smaller lags too, in order to avoid that one spike is split in two. We used a hand-chosen value of $s_{\min}=0.5 + 0.175\,e^{-\tau}$ where $\tau$ is the lag, that asymptotically approaches the $0.5$ for batch processing.
The obtained results are depicted in Fig~\ref{fig:lag}. For realistic SNR (3-5, though~\cite{chen2013} report even higher values, cf.\ Fig~\ref{fig:benchmark}C) and sample rates (30\,Hz), lags of 2-5 yielded virtually the same results as offline estimation. The exact number depends on the noise; however, the main effect of noise was to reduce the optimal performance attainable even with batch processing, as the asymptotic values in Fig~\ref{fig:lag}A and B reveal. 

\begin{figure}
\includegraphics[width=\textwidth]{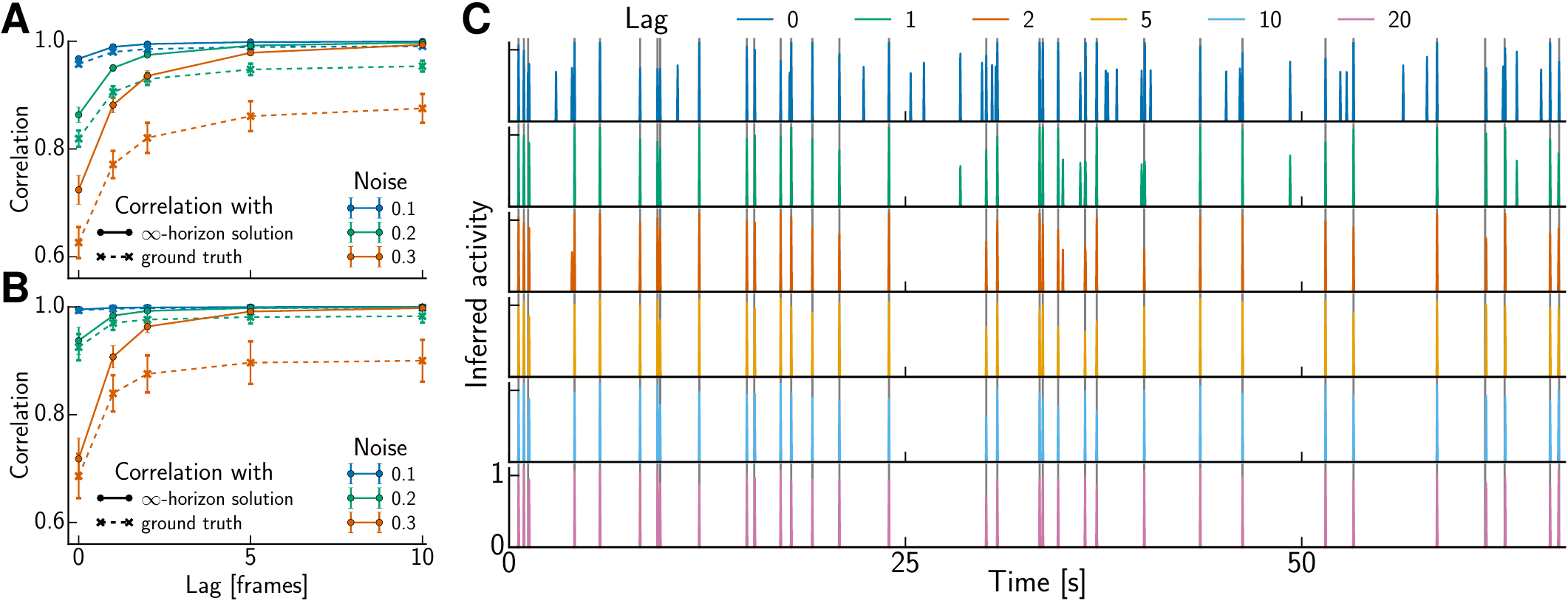}
\caption{{\bf Varied lag in the online estimator.}
\textbf{(A,B)} Performance of spike inference as function of lag for various noise levels (i.e., inverse SNR) without (A) and with positive threshold $s_{\min}$ (B). We used correlation of the inferred spike train as similarity measure and compared to ground truth as well as to the optimally recoverable activity when the lag is unlimited as in offline processing. 
\textbf{(C)} Inferred trace with positive threshold $s_{\min}$ for increasing lag using the data depicted in Fig~\ref{fig:benchmark}A with high noise level ($\sigma=0.3$). The gray lines indicate the true spike times.
}
\label{fig:lag}
\end{figure}

\section{Conclusion}

We presented an online active set method for spike inference from calcium imaging data. We assumed that the forward model to generate a fluorescence trace from a spike train is linear-Gaussian. Further development will extend the  method to nonlinear models~\cite{pologruto2004} incorporating saturation effects and a noise variance that increases with the mean fluorescence to better resemble the Poissonian statistics of photon counts. In the Appendix (section \ref{sec:weighted}) we already extend our mathematical formulation to include weights for each time point as a first step in this direction. 

Our method considered spike inference as a sparse non-negative deconvolution problem. We focused on the formulation that imposes sparsity using an $\ell_1$ penalty that renders the problem convex. Using this problem formulation for spike inference has already long standing success within the neuroscientific community. We were able to speed it up by an order of magnitude compared to previously employed interior point methods and derived an algorithm that lends itself to online applications.
However, recently several investigators~\cite{quan2016,dyer2013,pachitariu2016} have advocated sparser methods, e.g.\ by using an $\ell_q$-norm with $q<1$ instead of $q=1$~\cite{quan2016} or by enforcing refractoriness~\cite{dyer2013} (see also~\cite{pnevmatikakis2016} for some further discussion of sparsening beyond $\ell_1$ penalization). They report improved results, however in some cases at the expense of non-convexity, thus losing the guarantee of finding the global optimum.  We leave it to future work to incorporate refractoriness into the methods developed here, but we did slightly modify the sparse non-negative deconvolution problem by adding the constraint that positive spikes need to be larger than some minimal value. A minor modification to our algorithm enabled it to find an (approximate) solution of this non-convex problem, which can be marginally better than the solution obtained with $\ell_1$ regularizer. The \rev{$\ell_1$-penalized solution} reflects the uncertainty regarding the exact position of a spike by distributing it as ``partial spikes'' over neighboring bins. The \rev{thresholded solution} lets go of this potentially useful information and instead commits to one event within the locally optimal time bin. We leave it up to the user which approach to choose.

\subsubsection*{Code Availability} 

We provide Python and MATLAB implementations of our algorithm online (\href{https://github.com/j-friedrich/OASIS}{\color{blue}https://github.com/j-friedrich/OASIS} and linked repositories therein). The code is readily usable on new data and includes example scripts that produce all figures and Table~\ref{tab} of this article.

Here we focused on temporal data, i.e.\ noisy neural fluorescence data that has been extracted and demixed from raw pixel data. We further added OASIS as deconvolution subroutine to CaImAn (\href{https://github.com/simonsfoundation/CaImAn}{\color{blue}https://github.com/simonsfoundation/CaImAn})~\cite{giovannucci2017}, which implements CNMF for simultaneous denoising, deconvolution, and demixing of spatio-temporal calcium imaging data.

\subsubsection*{Acknowledgments}

We would like to thank Misha Ahrens and Yu Mu for providing whole-brain imaging data of larval zebrafish. We thank John Cunningham and Eftychios Pnevmatikakis for helpful conversations as well as Scott Linderman and Daniel Soudry for valuable comments on the manuscript.

Part of this work was previously presented at the Thirtieth Annual Conference on Neural Information Processing Systems (NIPS, 2016)~\cite{friedrich2016}.

Funding for this research was provided by Swiss National Science Foundation Research Award P300P2\_158428 (JF), NIH 2R01MH064537 and R90DA023426 (PZ), Simons Foundation Global Brain Research Awards 325171 and 365002 (JF,LP), ARO MURI W911NF-12-1-0594, NIH BRAIN Initiative R01 EB22913 and R21 EY027592, DARPA N66001-15-C-4032 (SIMPLEX), and a Google Faculty Research award (LP); in addition, this work was supported by the Intelligence Advanced Research Projects Activity (IARPA) via Department of Interior/ Interior Business Center (DoI/IBC) contract number D16PC00003, D16PC00008 (LP), and D16PC00007 (PZ). The funders had no role in study design, data collection and analysis, decision to publish, or preparation of the manuscript. The U.S. Government is authorized to reproduce and distribute reprints for Governmental purposes notwithstanding any copyright annotation thereon. Disclaimer: The views and conclusions contained herein are those of the authors and should not be interpreted as necessarily representing the official policies or endorsements, either expressed or implied, of IARPA, DoI/IBC, or the U.S. Government.

\appendix
\renewcommand{\theequation}{S\arabic{equation}}
\renewcommand{\thealgorithm}{S\arabic{algorithm}} 
\setcounter{section}{19}
\setcounter{equation}{0}
\setcounter{algorithm}{0}
\section*{Supplementary Material}
\addcontentsline{toc}{section}{\protect\numberline{}Supplementary Material}

\subsection{Algorithm for isotonic regression without pooling}
For ease of exposition Alg~\ref{alg:AVA} shows the pseudocode of the  isotonic regression algorithm used to convey the core idea. However, this na{\"\i}ve implementation lacks pooling, rendering it inefficient. It repeatedly updates all values $x_{t'},...,x_\tau$ during backtracking and calculates the updated value using Eq~(\ref{eq:pava}) without exploiting that part of the sum in the numerator has already been computed as an earlier result. It is thus merely $O(T^2)$, whereas introducing pools addresses both issues and yields an $O(T)$ algorithm.

\begin{algorithm}\small
\caption{Isotonic regression algorithm without pools (inefficient $O(T^2)$)}\label{alg:AVA}
\begin{algorithmic}[1]
\Require data $\bm y$
\State initialize $\bm x\leftarrow\bm y$
\For{$\tau$ in $2,...,T$} \Comment{move forward until end}
	\State $t'\leftarrow \tau$
     \While{$t'>1$ and $x_{t'} < x_{t'-1}$} \Comment{track back}
     	\State $t' \leftarrow t'-1$
     	\For{$i$ in $t',...,\tau$} $x_i \leftarrow \frac{\sum_{t=t'}^\tau y_t}{\tau-t'+1} $ \Comment{Eq~(\ref{eq:pava})}
        \EndFor
    \EndWhile 
\EndFor
\State \textbf{return} $\bm x$
 \end{algorithmic} 
\end{algorithm}

\subsection{Weighted regression} \label{sec:weighted}

For sake of generality we consider the case of weighted regression with weights $\bm \theta$. 
\begin{equation}
\underset{\hat{\bm c}\rev{,\hat{\bm s}}}{\textrm{minimize}}\quad  \frac{1}{2}\sum_t \theta_t(\hat{c}_t-y_t)^2 +  \lambda\sum_t \hat{s}_t\quad
\textrm{subject to\quad}  \hat{\bm s}=G\hat{\bm c}\ge 0 \label{eq:weighted problem}
\end{equation}
\rev{This generalization is not only of theoretical interest.}
These weights could be used to give lower weight to time points with higher variance for heteroscedastic data, for example  for the Poissonian statistics of photon counts where the variance of the fluorescence increases with its mean. 
Further, instead of the linear relationship between fluorescence and calcium concentration (Eq~\ref{eq:observation}) we could have a nonlinear observation model
\begin{equation}
y_t=f(c_t)+\epsilon_t
\end{equation}
where the nonlinear function $f$ can include saturation effects. This is often taken to be the Hill equation, i.e., $f(c)=\frac{a c^n}{c^n+k_d}+b$, with Hill coefficient  $n$, dissociation constant $k_d$, scaling factor  $a$ and baseline $b$ \cite{pologruto2004}.
Applying Newton's algorithm to optimize for $\hat{\bm s}$ (or equivalently $\hat{\bm c}$) results for each Newton step in a weighted constrained regression problem as in Eq~(\ref{eq:weighted problem}), \rev{which can be solved efficiently with OASIS. Hence, incorporating OASIS into Newton’s algorithm enables the algorithm to handle nonlinear and non-Gaussian measurements.}  

For an AR($1$) process introducing weights changes Eq~(\ref{eq:AR1pool}) to
\begin{equation}
\underset{c_{t'}'}{\rm minimize}\quad \frac{1}{2}\sum_{t=0}^{\Delta t} \theta_{t+t'}(\gamma^t c_{t'}' - y_{t+t'})^2 +  \sum_{t=0}^{\Delta t} \mu_{t+t'} \gamma^t c_{t'}' 
\end{equation}
and its solution is a modification of Eq~(\ref{eq:AR1}) by adding the weights
\begin{equation}
c_{t'}' = \frac{\sum_{t=0}^{\Delta t}  (\theta_{t+t'} y_{t+t'} - \mu_{t+t'})  \gamma^t}{\sum_{t=0}^{\Delta t}  \theta_{t+t'}\gamma^{2t}} \label{eq:weightedAR1}
\end{equation}
We merely need to initialize each pool as $(v_t, w_t, t_t,l_t)=(y_t-\frac{\mu_t}{\theta_t}, \theta_t, t, 1)$ for each
time step $t$ and the updates according to Eqs~(\ref{eq:update_v}-\ref{eq:update_I}) guarantee that Eq~(\ref{eq:weightedAR1}) holds for all values $v_i=c_{t_i}'$ as we prove in the next section.

For an AR($p$) process introducing weights changes Eq~(\ref{eq:ARp})  to 
\begin{equation}
c_{t'}' = 
\frac{\sum_{t=0}^{\Delta t}\left(\theta_{t+t'} \left(y_{t+t'}-\sum_{k=2}^p (A^t)_{1,k} c_{t'+1-k}'\right) - \mu_{t+t'}\right)(A^t)_{1,1} }
{\sum_{t=0}^{\Delta t} \theta_{t+t'} (A^t)_{1,1}^2} 
\end{equation}
and the same modified initialization holds.

\subsection{Validity of updates according to equations (\ref{eq:update_v}-\ref{eq:update_I}) }\label{sec:induction}

\begin{theorem}
The updates according to Eqs~(\ref{eq:update_v}-\ref{eq:update_I}) guarantee that Eqs~(\ref{eq:AR1}, \ref{eq:weightedAR1}) hold for all values $v_i=c_{t_i}'$.
\end{theorem}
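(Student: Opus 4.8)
The plan is an induction on the sequence of pool operations (point-initializations and merges) carried out by Algorithm~\ref{alg:AR1}, maintaining for every current pool $(v_i,w_i,t_i,l_i)$ the two arithmetic invariants
\[
w_i=\sum_{t=0}^{l_i-1}\theta_{t_i+t}\gamma^{2t},\qquad
w_i v_i=\sum_{t=0}^{l_i-1}\bigl(\theta_{t_i+t}y_{t_i+t}-\mu_{t_i+t}\bigr)\gamma^t,
\]
together with a bookkeeping invariant that the ordered list of pools always partitions the prefix $\{1,\dots,t\}$ of processed timesteps into consecutive blocks, so that neighbouring pools satisfy $t_{i+1}=t_i+l_i$. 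Dividing the second invariant by the first yields exactly Eq~(\ref{eq:weightedAR1}) with $t'=t_i$ and $\Delta t=l_i-1$, and setting all $\theta_t=1$ recovers Eq~(\ref{eq:AR1}); since the non-leading entries of a pool are defined through $c'_{t_i+t}=\gamma^t v_i$, the theorem follows once the invariants are in place.

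\emph{Base case.} A pool is born as $(y_t-\mu_t/\theta_t,\ \theta_t,\ t,\ 1)$ (in the unweighted algorithm, $(y_t-\mu_t,1,t,1)$). With $l_i=1$ both sums reduce to their $t=0$ term, giving $w_i=\theta_t$ and $w_i v_i=\theta_t y_t-\mu_t$, as required; appending the new pool at the right end of the list preserves the partition invariant.

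\emph{Inductive step.} Suppose pools $i$ and $i+1$ satisfy the invariants and are merged. By the partition invariant $t_{i+1}=t_i+l_i$, and Eqs~(\ref{eq:update_v})--(\ref{eq:update_I}) set $t_i'=t_i$, $l_i'=l_i+l_{i+1}$, $w_i'=w_i+\gamma^{2l_i}w_{i+1}$, and hence $w_i'v_i'=w_iv_i+\gamma^{l_i}w_{i+1}v_{i+1}$. The one computation that matters is a reindexing of the sums coming from pool $i+1$: substituting $s=t+l_i$ and using $t_{i+1}+t=t_i+s$ together with $\gamma^{2l_i}\gamma^{2t}=\gamma^{2s}$ (resp.\ $\gamma^{l_i}\gamma^{t}=\gamma^{s}$) turns $\gamma^{2l_i}w_{i+1}$ into $\sum_{s=l_i}^{l_i'-1}\theta_{t_i+s}\gamma^{2s}$ and $\gamma^{l_i}w_{i+1}v_{i+1}$ into $\sum_{s=l_i}^{l_i'-1}\bigl(\theta_{t_i+s}y_{t_i+s}-\mu_{t_i+s}\bigr)\gamma^{s}$. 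These combine with the corresponding sums for pool $i$ into single sums over $t=0,\dots,l_i'-1$ based at $t_i$, which are precisely the invariants for the merged pool; replacing two consecutive blocks by their union keeps the partition invariant. (This reindexing is also what makes the merge $O(1)$: no sum in Eq~(\ref{eq:AR1}) is recomputed.)

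The argument is routine; the only point requiring attention is carrying the adjacency/partition invariant alongside the arithmetic ones, since the index shift that produces the telescoping relies on $t_{i+1}=t_i+l_i$. No special treatment of the final timestep is needed: the differing value $\mu_T$ enters only through the generic term $\mu_{t_i+t}$. The identical induction applies to the AR($p$) weighted update once each pool is augmented with its trailing $p-1$ values, although there the weight no longer collapses to a single geometric sum.
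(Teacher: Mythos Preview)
Your proof is correct and follows essentially the same induction as the paper's: the same two arithmetic invariants on $w_i$ and $w_iv_i$, the same base case from the initialization, and the same reindexing in the merge step using $t_{i+1}=t_i+l_i$. The only (welcome) addition is that you make the partition/adjacency property an explicit invariant rather than invoking it as a fact mid-computation, as the paper does.
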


\begin{proof}
We proceed by induction.

\paragraph{Assumption:} Let for the denominator and numerator of Eq~(\ref{eq:weightedAR1}) hold
\begin{equation}
w_{i}=\sum_{t=0}^{l_i-1}  \theta_{t+t_i}\gamma^{2t} \label{eq:weightAR1}
\end{equation}
and
\begin{equation}
w_{i}v_{i}=\sum_{t=0}^{l_i-1} \left( \theta_{t+t_i} y_{t+t_{i}} -  \mu_{t+t_i} \right)  \gamma^{t} \label{eq:valueAR1}
\end{equation}

\paragraph{Base case:} Pools are initialized as $(v_t, w_t, t_t,l_t)=(y_t-\frac{\mu_t}{\theta_t}, \theta_t, t, 1)$ for each time step $t$ such that Eqs~(\ref{eq:weightAR1}, \ref{eq:valueAR1}) hold.

\paragraph{Induction step:}
Consider two pools $(v_{i},w_{i},t_i,l_i)$ and $(v_{i+1},w_{i+1},t_{i+1},l_{i+1})$ that satisfy Eqs~(\ref{eq:weightAR1}, \ref{eq:valueAR1}) and are merged to pool $(v_{i}',w_{i}',t_{i}',l_{i}')$ according to Eqs~(\ref{eq:update_v}-\ref{eq:update_I}).
\begin{align*}
w_{i}' &= w_{i}+\gamma^{2l_i} w_{i+1}  = 
\sum_{t=0}^{l_i-1}  \theta_{t+t_{i}}\gamma^{2t} + \sum_{t=0}^{l_{i+1}-1}  \theta_{t+t_{i+1}} \gamma^{2l_i} \gamma^{2t}\\
&= \sum_{t=0}^{l_i+l_{i+1}-1}  \theta_{t+t_{i}}\gamma^{2t} 
=\sum_{t=0}^{l_{i}'-1}  \theta_{t+t_{i}'}\gamma^{2t} 
\end{align*}
where we used the contingency of the pools, $t_{i+1}=t_{i}+l_i$. Thus after the update Eq~(\ref{eq:weightAR1}) holds for the merged pool too. It remains to show this also for the values:
\begin{align*}
w_{i}' v_{i}' &= w_{i}v_{i}+\gamma^{l_i} w_{i+1} v_{i+1} \\
&= \sum_{t=0}^{l_i-1}  \left( \theta_{t+t_{i}} y_{t+t_{i}} - \mu_{t+t_i} \right) \gamma^t
 + 
 \sum_{t=0}^{l_{i+1}-1}  \left( \theta_{t+t_{i+1}} y_{t+t_{i+1}}  -  \mu_{t+t_{i+1}} \right)  \gamma^{l_i} \gamma^t \\
&= \sum_{t=0}^{l_i+l_{i+1}-1} \left( \theta_{t+t_{i}} y_{t+t_{i}} -  \mu_{t+t_i} \right) \gamma^t
=\sum_{t=0}^{l_{i}'-1} \left( \theta_{t+t_{i}'}  y_{t+t_{i}'}  -  \mu_{t+t_i'} \right) \gamma^t
\end{align*}
\end{proof}

\rev{
\subsection{Initial calcium fluorescence}

Thus far we have not explicitly taken account of elevated initial calcium fluorescence levels due to previous spiking activity. 
For the case $p=1$ positive fluorescence values $c_1$ capture initial calcium fluorescence that decays exponentially. Positive values $c_1$ lead via $\bm s=G\bm c$ to a positive spike $s_1$. 
Instead of attributing the elevated fluorescence to a spike at time $t=1$, a positive $s_1$ more likely accounts for previous spiking activity. Therefore we remove the initial spike by setting $s_1=0$ (Alg~\ref{alg:AR1}, line~12).

For $p=2$ we can model the effect of prior spiking activity as an exponential decay, too.
Because the validity of the constraint $c_t\ge\sum_{i=1}^p \gamma_i c_{t-i}$ can only be evaluated if $t>p$, for $p>1$ the first pool stays thus far merely at its initialization $(y_1-\mu_1,y_1-\mu_1,1,1)$, and the noisy raw data value is taken as true $c_1$. Instead, we suggest to use the first pool to model the exponential decay due to previous spiking activity. Given $c_1=v_1$ the fluorescence values $c_t$ for $t=1, ..., l_1$ are then given by $d^{t-1}c_1$ with decay variable 
\begin{equation}
d=\tfrac{1}{2}(\gamma_1+\sqrt{\gamma_1^2+ 4\gamma_2})  \label{eq:d}
\end{equation}
as well known in the AR / linear systems literature \cite{brockwell2013}.
The first pool is merged with the second one whenever the constraint $v_2\ge d^{l_1}v_1$ is violated.

\subsection{Explicit expressions of the hyperparameter updates for AR(2)}\label{sec:expressions}

We solve the noise constrained problem by increasing $\lambda$ in the dual formulation until the noise constraint is tight. We start with some small $\lambda$, e.g.\ $\lambda=0$, to obtain a first partitioning into pools $\mathcal{P}$. 

We denote all except the differing last two components of $\bm\mu$ by $\mu=\lambda(1-\gamma_1-\gamma_2)$ (Eq~\ref{eq:penalty2}) and express the components of $\bm\mu$ as $\mu_t=\mu\,\kappa_t$ with 
\begin{equation}
\kappa_t=\begin{cases}
\frac{1}{1-\gamma_1-\gamma_2} &  \textrm{if} \quad t=T\\
\frac{1-\gamma_1}{1-\gamma_1-\gamma_2} &  \textrm{if} \quad t=T-1 \\
\hfill1\hfill & \textrm{else} .
\end{cases} 
\end{equation}
Given some $\mu(\lambda)$, the value of the first pool used to model the initial calcium fluorescence is (using Eq~\ref{eq:AR1})
\begin{equation}
 \hat{c}_1 = \frac{\sum_{t=1}^{l_1}  (y_{t} -\mu\,\kappa_t)d^{t-1}}{\sum_{t=0}^{l_1-1}d^{2t}}\label{eq:1st_pool}
\end{equation}
with decay factor $d$ defined in Eq~(\ref{eq:d}). The other values in this first pool are implicitly defined by
\begin{equation}
\hat{c}_t= d\,\hat{c}_{t-1} \quad\textrm{for}\quad t=2,...,l_1. 
\end{equation}
The values of the other pools are according to Eq~(\ref{eq:ARp})
\begin{equation}
\hat{c}_{t_i} = 
\frac{\sum_{t=0}^{l_i-1} \left(y_{t_i+t}  - \mu\,\kappa_{t_i+t} - (A^t)_{1,2} \hat{c}_{t_i-1}\right)(A^t)_{1,1} }
{\sum_{t=0}^{l_i-1} (A^t)_{1,1}^2}
\end{equation}
The other values in the pool are implicitly defined by 
\begin{equation}
\hat{c}_{t_i+t}=\gamma_1 \hat{c}_{t_i+t-1}+\gamma_2 \hat{c}_{t_i+t-2}
\quad\textrm{for}\quad t=1,...,l_i-1. \label{eq:within_pool}
\end{equation}
Altogether these equations define $\hat{\bm c}(\mu)$ as function of $\mu$. The solution $\hat{\bm c}'=\hat{\bm c}(\mu')$ for an updated value $\mu'=\mu+\Delta\mu$ is linear in $\Delta\mu$
\begin{equation}
\hat{\bm c}'=\hat{\bm c}-\Delta\mu\bm f \label{eq:c(mu)}
\end{equation}
which plugged in above Eqs~(\ref{eq:1st_pool}-\ref{eq:within_pool}) yields that $\bm f$ can be evaluated using the following equations by plugging in the numerical values of $\gamma_1$, $\gamma_2$, $d$, $\bm \kappa$, $A$ and $\{l_i\}$
\begin{align}
f_{1} &= \frac{\sum_{t=1}^{l_1}  \kappa_t d^{t-1}}{\sum_{t=0}^{l_1-1}d^{2t}} &&\\
f_t 	&= d\,f_{t-1} & \textrm{for}\quad& t=2,...,l_1\\
f_{t_i} &= \frac{\sum_{t=0}^{l_i-1} \left(\kappa_{t_i+t} - (A^t)_{1,2} f_{t_i-1}\right)(A^t)_{1,1} }
{\sum_{t=0}^{l_i-1} (A^t)_{1,1}^2} &\textrm{for}\quad & i=2,...,z\\
f_{t_i+t} &= \gamma_1 f_{t_i+t-1}+\gamma_2 f_{t_i+t-2} &\textrm{for}\quad& t=1,...,l_i-1
\end{align}
where $z$ is the index of the last pool and because pools are updated independently we make the approximation that no changes in the pool structure occur.
Inserting Eq~(\ref{eq:c(mu)}) into the noise constraint (Eq~\ref{eq:hard problem}) and denoting the residual as $\bm r=\hat{\bm c} - \bm y$ results in
\begin{equation}
\| \hat{\bm c}' - \bm y\|^2 = 
\| \hat{\bm c}-\Delta\mu\bm f - \bm y\|^2 = 
\|\bm r-\Delta\mu\bm f\|^2 = 
 \|\bm f\|^2 \Delta\mu^2 - 2\bm r^\top\! \bm f \Delta\mu + \|\bm r\|^2 \overset{!}{=} 
\hat{\sigma}^2T
\end{equation} 
and solving the quadratic equation for $\Delta\mu$ yields
\begin{equation}
\Delta\mu = \frac{\bm r^\top\! \bm f 
+ \sqrt{(\bm r^\top\! \bm f)^2-\|\bm f\|^2(\|\bm r\|^2 - \hat{\sigma}^2T)}}
{\|\bm f\|^2}.
\end{equation}

If we jointly want to optimize the baseline too, we denote again the total shift applied to the data (except for the last two time steps) due to baseline and sparsity penalty as $\phi=b+\mu$. We increase $\phi$ until the noise constraint is tight. The optimal baseline $\hat{b}$ minimizes the objective (\ref{eq:problem+b}) with respect to it, yielding $\hat{b}=\langle\bm y-\hat{\bm c}\rangle=\frac{1}{T}\sum_{t=1}^T(y_t-\hat{c}_t)$.   
Appropriately adding $\hat{b}$ to the noise constraint yields
\begin{align}
\| \hat{b}' \bm 1 + \hat{\bm c}' - \bm y\|^2 &= 
\| \langle \bm y - \hat{\bm c} + \Delta\phi\bm f \rangle \bm 1   +   \hat{\bm c}-\Delta\phi\bm f - \bm y\|^2 \\
&= 
\| \underbrace{\hat{b}\bm 1 + \hat{\bm c} - \bm y}_{\bm r}  - \Delta\phi(\underbrace{\bm f - \langle \bm f \rangle \bm 1}_{\bar{\bm f}}) \|^2
 \overset{!}{=} 
\hat{\sigma}^2T
\end{align} 
where we used Eq~(\ref{eq:c(mu)}), the current value of the baseline $\hat{b}=\langle \bm y - \hat{\bm c}\rangle$ and the updated value $\hat{b}'=\langle \bm y - \hat{\bm c} + \Delta\phi\bm f \rangle$.
Solving the quadratic equation for $\Delta\phi$ yields
\begin{equation}
\Delta\phi = \frac{(\bm r^\top\! \bar{\bm f} 
+ \sqrt{(\bm r^\top\! \bar{\bm f})^2-\|\bar{\bm f}\|^2(\|\bm r\|^2 - \hat{\sigma}^2T)}}
{\|\bar{\bm f}\|^2}.
\end{equation} 
}

\subsection{Supplementary videos}

\subsubsection*{Video S1}\label{Video S1} 
The 
\href{https://github.com/j-friedrich/OASIS/blob/master/examples/PAVA.mp4}{\color{blue}\uline{supplementary video}} 
illustrates PAVA. The pool currently under consideration is indicated by the blue crosses. The algorithm sweeps through the time series and enforces the order constraints $x_1\le ...\le x_T$. 

\subsubsection*{Video S2}\label{Video S2}
The
\href{https://github.com/j-friedrich/OASIS/blob/master/examples/video.mp4}{\color{blue}\uline{supplementary video}} 
illustrates OASIS for an AR($1$) process. As in Figure \ref{fig:illustrate}, red lines depict true spike times and the shaded background shows how the time points are gathered in pools. The pool currently under consideration %
is indicated by the blue crosses. The upper panel shows how the calcium fluorescence trace $\bm c'$ develops while the algorithm runs, cf.\ Figure \ref{fig:illustrate}. The video additionally shows the deconvolved trace $\bm s'=G\bm c'$ (Eq.~\ref{eq:problem}) in the lower panel. The algorithm sweeps through the time series and enforces the constraint $\bm s'\ge0$.

\bibliographystyle{unsrt}
\bibliography{../../OASIS}

\end{document}